\numberwithin{equation}{section}
\newtheorem{theorem}{Theorem}[section]
\newtheorem{lemma}[theorem]{Lemma}
\newtheorem{proposition}[theorem]{Proposition}
\theoremstyle{definition}
\newtheorem{definition}[theorem]{Definition}
\newtheorem{remark}[theorem]{Remark}
\newtheorem{example}[theorem]{Example}
\newcommand{\Id}{\mathbbmss{1}}
\newcommand{\rmh}{\textnormal{h}}
\DeclareMathOperator{\Vect}{Vect}
\DeclareMathOperator{\tr}{tr}
\font\black=cmbx10 \font\sblack=cmbx7 \font\ssblack=cmbx5 \font\blackital=cmmib10  \skewchar\blackital='177
\font\sblackital=cmmib7 \skewchar\sblackital='177 \font\ssblackital=cmmib5 \skewchar\ssblackital='177
\font\sanss=cmss10 \font\ssanss=cmss8 
\font\sssanss=cmss8 scaled 600 \font\blackboard=msbm10 \font\sblackboard=msbm7 \font\ssblackboard=msbm5
\font\caligr=eusm10 \font\scaligr=eusm7 \font\sscaligr=eusm5  \font\fraktur=eufm10
\font\sfraktur=eufm7 \font\ssfraktur=eufm5 
\font\bsymb=cmsy10 scaled\magstep2
\def\all#1{\setbox0=\hbox{\lower1.5pt\hbox{\bsymb
       \char"38}}\setbox1=\hbox{$_{#1}$} \box0\lower2pt\box1\;}
\def\exi#1{\setbox0=\hbox{\lower1.5pt\hbox{\bsymb \char"39}}
       \setbox1=\hbox{$_{#1}$} \box0\lower2pt\box1\;}
\def\tx#1{{\fam0\relax#1}}
\def\sss#1{{\fam\ssfam\relax#1}}
\def\hpb#1{\setbox0=\hbox{${#1}$}
    \copy0 \kern-\wd0 \kern.2pt \box0}
\def\vpb#1{\setbox0=\hbox{${#1}$}
    \copy0 \kern-\wd0 \raise.08pt \box0}
\def\pmb#1{\setbox0\hbox{${#1}$} \copy0 \kern-\wd0 \kern.2pt \box0}
\def\pmbb#1{\setbox0\hbox{${#1}$} \copy0 \kern-\wd0
      \kern.2pt \copy0 \kern-\wd0 \kern.2pt \box0}
\def\pmbbb#1{\setbox0\hbox{${#1}$} \copy0 \kern-\wd0
      \kern.2pt \copy0 \kern-\wd0 \kern.2pt
    \copy0 \kern-\wd0 \kern.2pt \box0}
\def\pmxb#1{\setbox0\hbox{${#1}$} \copy0 \kern-\wd0
      \kern.2pt \copy0 \kern-\wd0 \kern.2pt
      \copy0 \kern-\wd0 \kern.2pt \copy0 \kern-\wd0 \kern.2pt \box0}
\def\pmxbb#1{\setbox0\hbox{${#1}$} \copy0 \kern-\wd0 \kern.2pt
      \copy0 \kern-\wd0 \kern.2pt
      \copy0 \kern-\wd0 \kern.2pt \copy0 \kern-\wd0 \kern.2pt
      \copy0 \kern-\wd0 \kern.2pt \box0}
\mathchardef\za="710B  
\mathchardef\zb="710C  
\mathchardef\zg="710D  
\mathchardef\zd="710E  
\mathchardef\zve="710F 
\mathchardef\zz="7110  
\mathchardef\zh="7111  
\mathchardef\zvy="7112 
\mathchardef\zi="7113  
\mathchardef\zk="7114  
\mathchardef\zl="7115  
\mathchardef\zm="7116  
\mathchardef\zn="7117  
\mathchardef\zx="7118  
\mathchardef\zp="7119  
\mathchardef\zr="711A  
\mathchardef\zs="711B  
\mathchardef\zt="711C  
\mathchardef\zu="711D  
\mathchardef\zvf="711E 
\mathchardef\zq="711F  
\mathchardef\zc="7120  
\mathchardef\zw="7121  
\mathchardef\ze="7122  
\mathchardef\zy="7123  
\mathchardef\zf="7124  
\mathchardef\zvr="7125 
\mathchardef\zvs="7126 
\mathchardef\zf="7127  
\mathchardef\zG="7000  
\mathchardef\zD="7001  
\mathchardef\zY="7002  
\mathchardef\zL="7003  
\mathchardef\zX="7004  
\mathchardef\zP="7005  
\mathchardef\zS="7006  
\mathchardef\zU="7007  
\mathchardef\zF="7008  
\mathchardef\zW="700A  
\mathchardef\zC="7009  
\newcommand{\be}{\begin{equation}}
\newcommand{\ee}{\end{equation}}
\newcommand{\bea}{\begin{eqnarray}}
\newcommand{\eea}{\end{eqnarray}}
\def\*{{\textstyle *}}
\newcommand{\R}{{\mathbb R}}
\newcommand{\Z}{{\mathbb Z}}
\newcommand{\ot}{\otimes}
\newcommand{\s}{{\textstyle *}}
\def\Vect{\sss{Vect}}
\def\xi{\tx{i}}
\def\xd{\operatorname{d}}
\def\s*{{\scriptstyle *}}
\def\cO{\mathcal{O}}
\newcommand{\beas}{\begin{eqnarray*}}
\newcommand{\eeas}{\end{eqnarray*}}
\def\half{\frac{1}{2}}
\title{Riemannian structures on $\Z_2^n$-manifolds }
   \author{Andrew James Bruce$^\dag$ \& Janusz Grabowski$^\ddag$}
 \address{ $^\dag$Mathematics Research Unit, University of Luxembourg,  Esch-sur-Alzette, Luxembourg \\
   \newline  $^\ddag$ Institute of Mathematics, Polish Academy of Sciences, Poland}
   \email{andrewjamesbruce@googlemail.com,~jagrab@impan.pl}
\date{\today}
\begin{document}

\begin{abstract}
Very loosely, $\Z_2^n$-manifolds are `manifolds' with $\Z_2^n$-graded coordinates and their sign rule is determined by the scalar product of their $\Z_2^n$-degrees. A little more carefully, such objects can be understood within a sheaf-theoretical  framework, just as supermanifolds can, but with subtle differences. In this paper, we examine the notion of a Riemannian $\Z_2^n$-manifold, i.e., a $\Z_2^n$-manifold equipped with a Riemannian metric that may carry non-zero $\Z_2^n$-degree. We show that the basic notions and tenets of Riemannian geometry directly generalise to the setting of $\Z_2^n$-geometry. For example, the Fundamental Theorem holds in this higher graded setting.  We  point out the similarities and differences with Riemannian supergeometry. \par
\smallskip\noindent
{\bf Keywords:}
$\Z_2^n$-manifolds;~Riemannian structures;~affine connections.\par
\smallskip\noindent
{\bf MSC 2010:}~14A22;~53B05;~53C50;~58A50. 	
\end{abstract}

 \maketitle

\setcounter{tocdepth}{2}
 \tableofcontents

\section{Introduction}
One can hardly overstate the importance of  (pseudo-)Riemannian geometry in physics, and in particular within metric theories of gravity theories such as general relativity. The notion of a Riemannian metric is also central in geometric mechanics, classical field theories, various non-linear sigma models, string theories, as well as  global analysis, stochastic differential geometry, etc.  Riemannian structures on supermanifolds are not so well studied as their classical counterparts.  Relatively recent papers on the subject include \cite{Galaev:2012,Garnier:2014,Garnier:2012,Goertsches:2008}. One of the interesting aspects of the Riemannian supermanifolds is that we can have metrics that are either degree zero or degree one, referred to as even and odd Riemannian metrics, respectively. Within the context of string theory,  it is now well appreciated that generalised complex geometry is the right geometrical framework to describe NS-NS fluxes. It appears that RR-fluxes  require even Riemannian supergeometry as their geometric framework (see for example \cite{Grassi:2007}). In contrast, odd Riemannian supergeometry remains a mathematical curiosity in the sense that applications in physics have not been so forthcoming.  This is in stark contrast with odd symplectic geometry which is central to the BV-BRST formalism. \par
Recently, there has been a series of papers developing $\Z_2^n$-geometry ($\Z_2^n := \Z_2 \times \cdots \times \Z_2 $) and related algebraic questions, including \cite{Bruce:2020,Bruce:2018,Bruce:2018b,Covolo:2012,Covolo:2016a,Covolo:2016b,Poncin:2016}.  Alongside this, there has been a renewed interest in the physical applications of $\Z_2^n$-gradings, see for example \cite{Aizawa:2020a,Aizawa:2020b,Aizawa:2016,Aizawa:2020c,Aizawa:2020d,Bruce:2019a,Bruce:2020a,Bruce:2019aa,Tolstoy:2014b}. The origin of $\Z_2^n$-Lie algebras and their associative algebra counterparts can be traced back to  Rittenberg \& Wyler \cite{Rittenberg:1978} and Scheunert \cite{Scheunert:1979}, both motivated by physical considerations.  It has long been appreciated that $\Z_2 \times \Z_2$-gradings are important in parastatistics, see for example \cite{Tolstoy:2014} and references therein. However, the r\^{o}le  and importance of these higher gradings in physics are not entirely clear.  That said, $\Z_2^n$-manifolds represent a natural, but challenging generalisation of supermanifolds.  Informally, $\Z_2^n$-manifolds are `manifolds' for which we have coordinates that are $\Z_2^n$-commutative, the sign rule being determined by their $\Z_2^n$-degree and the standard scalar product. Importantly, this means that we have formal coordinates that are not nilpotent, provided $n \geq 2$, naturally, $n=1$ is exactly the theory of supermanifolds.  We view $\Z_2^n$-manifolds, much like supermanifolds, as a ‘pit stop’ on the journey from classical differential geometry to noncommutative geometry.  Higher graded-commutative versions of geometry act as a testbed for ideas in fully-fledged noncommutative geometry.    \par
In this paper, we present the notion of a Riemannian $\Z_2^n$-manifold   and make an initial study of the generalisation of the standard objects found in classical Riemannian geometry, i.e.,  Riemannian metrics, Levi-Civita connections, and the curvature tensors. The bulk of Section \ref{sec:RiemZman} is devoted to developing the general theory of Riemannian $\Z_2^n$-geometry. We begin with a recollection of the aspects of $\Z_2^n$-manifolds that we need for the remainder of this paper.  Like Riemannian supergeometry where we have even and odd metric, on  $\Z_2^n$-manifolds we can have Riemannian metrics that carry non-zero $\Z_2^n$-degree (see Definition \ref{def:RiemMet}). The non-degeneracy of the Riemannian metric places constraints on the dimension of the $\Z_2^n$-manifold depending on the $\Z_2^n$-degree of the metric (see proposition \ref{prop:NonDegCon}).  Most notability, we have a generalisation of the fundamental theory of Riemannian geometry to the setting of $\Z_2^n$-geometry which does not depend on the $\Z_2^n$-degree of the Riemannian metric (see Theorem \ref{trm:FundTheo}). In short, there is a natural generalisation of the Levi-Civita connection in the setting of $\Z_2^n$-geometry. From there we define the Riemann curvature tensor and show that it satisfied the expected generalisation of the Bianchi identities (see Definition \ref{def:RiemCurTen}, Proposition \ref{prop:FirstBianc} and Proposition \ref{prop:SecondBianc}). The Ricci curvature tensor and Ricci scalar can, of course, also be defined (see Definition \ref{def:RicCurv} and Definition \ref{def:RicScal}). We can also classify metrics by their total degree and so we have a natural notion of even and odd Riemannian metrics on $\Z_2^n$-manifolds. For the most part, even and odd Riemannian metrics are not so different: all  the notions and basic constructions are insensitive to the degree.  However, for odd Riemannian metrics, the Ricci scalar identically vanishes (see Proposition \ref{prop:OddRVan}). Moreover, we show that the connection Laplacian (acting on functions) associated with an odd Riemannian metric vanishes (see Theorem \ref{trm:OddLapVanish}). We end this paper in Section \ref{sec:Disc} with a discussion of the consequences of  some of the results  derived in this paper in generalising de Sitter, anti-de Sitter and Einstein manifolds to $\Z_2^n$-geometry. \par
We must remark that the global algebra of functions on a $\Z_2^n$-manifold is an example of an almost commutative algebra in the sense of  Bongaarts \& Pijls \cite{Bongaarts:1994}, which were constructed to give a very workable class of noncommutative geometries.  It is known that a $\Z_2^n$-manifold is fully described by its algebra of global sections (see \cite[Corollary 3.8]{Bruce:2018b}). Linear connections on almost commutative algebras were given by Ciupal\u{a}  \cite{Ciupala:2003}. Moreover, the notion of  a metric and the Levi-Civita connection etc. on an almost commutative algebra were given by Nagakeu \cite{Ngakeu:2007}. While there is some crossover with the results of Nagekeu, the key difference is that we have both global and local algebras of functions as well as coordinate expressions.  Some of the results in this paper represent a generalisation of the discussion of Asorey \& Lavrov \cite{Asorey:2009} comparing even and odd Riemannian metrics on supermanifolds. For a comparison of the Laplacians found in Poisson and Reimannian supergeometry, the reader can consult Khudaverdian \& Voronov \cite{Khudaverdian:2002}.

\section{Riemannian $\Z_2^n$-manifolds}\label{sec:RiemZman}
\subsection{The basics of $\Z_2^n$-geometry}
The locally ringed space approach to $\Z_2^n$-manifolds was originally given by Covolo, Grabowski and Poncin in   \cite{Covolo:2016a,Covolo:2016b}. Technical issues related to the functional analytic properties of the structure sheaves of $\Z_2^n$-manifolds and their products were carefully explored in \cite{Bruce:2018,Bruce:2018b}.  We will draw upon these works and not present proofs of any formal statements. To set some initial notation, by $\Z_2^n$ we mean the abelian group $\Z_2 \times \Z_2 \times \cdots \times \Z_2$ where the Cartesian product is over $n$ factors. With an ordering fixed, we will often denote elements of $\Z_2^n$ as $\gamma_i$ for $i = 0 ,1 \cdots , N$, where  $N = 2^n-1$.  We will use the convention of ordering the elements of $\Z_2^n$  by filling in the zeros and the ones from the left,  and then  placing the even elements first and then the odd elements. For example, with this choice of ordering
\begin{align*}
& \Z_{2}^{2} = \{ (0,0),  \: (1,1), \: (0,1), \: (1,0)\},\\
& \Z_2^3 = \{ (0, 0, 0), \:  (0, 1, 1), \: (1, 0, 1), \: (1, 1, 0),\:  (0, 0, 1), \: (0, 1, 0), \: (1, 0, 0),  \:  (1, 1, 1) \}.
\end{align*}
Naturally, other choices of ordering are possible and have been used in the literature. Throughout this paper we set $\mathbf{q} = (q_{1}, q_{2}, \cdots , q_{N})$, where  $N = 2^n-1$ and $q_i  \in \mathbb{N}$.
\begin{definition}
A \emph{locally} $\Z_{2}^{n}$-\emph{ringed space}, $n \in \mathbb{N} \setminus \{0\}$, is a pair $S := (|S|, \cO_S )$ where $|S|$ is a second-countable Hausdorff space, and a $\cO_{S}$  is a sheaf  of $\Z_{2}^{n}$-graded $\Z_{2}^{n}$-commutative associative unital $\R$-algebras, such that the stalks $\cO_{S,p}$, $p \in  |S|$ are local rings.
\end{definition}
In this context,  $\Z_2^n$ -commutative means that any two sections $s$, $t \in \cO_{S}(|U|)$, $|U| \subset |S|$ open, of homogeneous degree $\deg(s)  \in \Z_{2}^{n}$ and $\deg(t) \in \Z_2^n$ commute up to the sign rule
$$st = (-1)^{\langle \deg(s), \deg(t)\rangle} \: ts,$$
where $\langle \: , \:\rangle$ is the standard scalar product on $\Z_2^n$. For example, consider the  $\gamma_3 =(0,1)$ and $\gamma_2 =(1,1) \in \Z_2^2$, then $\langle \gamma_3, \gamma_2 \rangle  = 0 \times 1 + 1 \times 1 =1$. If we write `symmetric' or `skew-symmetric', we will always mean in the $\Z_2^n$-graded sense unless otherwise stated.
\begin{definition}\label{def:Z2nGrassmann}
 A  $\Z_2^n$-\emph{Grassmann algebra}  $\Lambda^{\mathbf{q}} := \R[[\zx]]$ is a formal power series generated by the $\Z_2^n$-graded variables $\{ \zx^A \}$, there are $q_i$ generators of degree $\gamma_i \in \Z_2^n$,  ($i >0$), subject to the  relation
$$\zx^A \zx^B = (-1)^{\langle \deg(A) , \deg(B)  \rangle } \zx^B \zx^A,$$
where $\deg(\zx^A) =: \deg(A) \in \Z_2^n\setminus \{ 0\}$ and similar for $\zx^B$.
\end{definition}
It is important to note that elements of such algebras are formal power series  and not, in general, polynomials as we have non-nilpotent generators, i.e., the generators of even total degree.  A $\Z_2^n$-manifold is, very loosely, a manifold whose structure sheaf is modified to include generators of a $\Z_2^n$-Grassmann algebra.
\begin{definition}
A (smooth) $\Z_2^n$-\emph{manifold} of dimension $p |\mathbf{q}$ is a locally $\Z_{2}^{n}$-ringed space $ M := \left(|M|, \cO_M \right)$, which is locally isomorphic to the $\Z_2^n$-ringed space $\mathbb{R}^{p |\mathbf{q}} := \left( \mathbb{R}^{p}, C^{\infty}_{\mathbb{R}^{p}}(-)[[\zx]] \right)$. Local sections of $M$ are formal power series in the $\Z_2^n$-graded variables $\zx$ with  smooth coefficients,
$$\cO_M(|U|) \simeq C^{\infty}(|U|)[[\zx]] :=  \left \{ \sum_{\alpha \in \mathbb{N}^{N}}^{\infty}  \zx^{\alpha}f_{\alpha} ~ | \: f_{ \alpha} \in C^{\infty}(|U|)\right \},$$
for `small enough' open domains $|U|\subset |M|$.   \emph{Morphisms} between $\Z_{2}^{n}$-manifolds are  morphisms of $\Z_{2}^{n}$-ringed spaces, that is,  pairs $(\phi, \phi^{*}) : (|M|, \cO_M) \rightarrow  (|N|, \cO_N)$ consisting of  of a continuous map  $\phi: |M| \rightarrow |N|$ and sheaf morphism $\phi^{*}: \cO_N(|V|) \rightarrow \cO_M(\phi^{-1}(|V|))$, where $|V| \subset |N|$  is open. We will refer to the global sections of the structure sheaf $\cO_M$ as \emph{functions} on $M$ and denote then as $C^{\infty}(M) := \cO_{M}(|M|)$.
\end{definition}
\begin{example}[The local model]\label{exp:SuperDom}
The locally $\Z_{2}^{n}$-ringed space $\mathcal{U}^{p|\mathbf{q}} :=  \big(\mathcal{U}^p , C^\infty_{\mathcal{U}^p}(-)[[\zx]] \big)$, where $\mathcal{U}^p \subseteq \R^p$ is naturally a $\Z_2^n$-manifold -- we refer to such $\Z_2^n$-manifolds as \emph{$\Z_2^n$-domains} of dimension $p|\mathbf{q}$.  We can employ (natural) coordinates $x^I := (x^a, \zx^A)$ on any $\Z_2^n$-domain, where $x^a$ form a coordinate system on $\mathcal{U}^p$ and the $\zx^A$ are formal coordinates.
\end{example}
Associated with any $\Z_{2}^{n}$-graded algebra $\mathcal{A}$ is the  ideal $J$ of $\mathcal{A}$ generated by all homogeneous elements of $\mathcal{A}$ that have non-zero degree. The associated $J$-adic topology plays an important r\^ole in the theory of $\Z_{2}^{n}$-manifolds. Given a morphism of $\Z_{2}^{n}$-graded algebras $f : \mathcal{A} \rightarrow \mathcal{A}^{\prime}$, then $f (J_{\mathcal{A}} ) \subset J_{\mathcal{A}^{\prime}}$.  These notions can be `sheafified', i.e., for any $\Z_{2}^{n}$-manifold $M$, there exists an \emph{ideal sheaf} $\mathcal{J}_M$, defined by
$\mathcal{J}(|U| ) := \langle f \in \cO_M(|U|)~|~ \deg(f)\neq 0 \rangle$. The $\mathcal{J}_M$-adic topology on $\cO_M$ can then be defined in the obvious way. Furthermore, for any $\Z_2^n$-manifold~$M$, there exists a short exact sequence of sheaves of $\Z_2^n$-graded $\Z_2^n$-commutative associative $\R$-algebras
\begin{equation}\label{eqn:SES}
0\longrightarrow\ker\zve\longrightarrow\cO_M\stackrel{\zve}{\longrightarrow}C^\infty_{|M|}\longrightarrow 0,
\end{equation}
such that $\ker \zve=\mathcal{J}_M$. Informally, $\zve_{|U|}: \cO_M(|U|) \rightarrow C^\infty(|U|)$ is simply ``throwing away" the formal coordinates.\par
Much like the theory of manifolds, or indeed supermanifolds, one can global geometric concepts via the glueing of local geometric concepts. That is, we can consider a $\Z_2^n$-manifold as being cover by  $\Z_2^n$-domains together with the appropriate glueing information, i.e., admissible coordinate transformations.   Fundamental here is the \emph{chart theorem} (\cite[Theorem 7.10]{Covolo:2016}) that allows us to write morphisms of $\Z_2^n$-manifolds in terms of the local coordinates.  Specifically, suppose we have two  $\Z_2^n$-domains $\mathcal{U}^{p|\mathbf{q}}$ and $\mathcal{V}^{r|\mathbf{s}}$. Then morphisms $\phi: \mathcal{U}^{p|\mathbf{q}} \longrightarrow \mathcal{V}^{r|\mathbf{s}}$  correspond  to \emph{graded unital $\R$-algebra morphisms}
 \begin{equation*}
 \phi^* : C^{\infty}\big(\mathcal{V}^r \big)[[\eta]] \longrightarrow   C^{\infty}\big(\mathcal{U}^p \big)[[\zx]],
 \end{equation*}
which are fully determined by their coordinate expressions.  For any $M := \big( |M|, \cO_M \big)$, we  define open $\Z_2^n$-submanifolds  of $M$ as  $\Z_2^n$-manifolds of the form $U := \big(|U|, \cO_M|_{|U|} \big)$, where $|U|\subset |M|$ is open. Due to the local structure of a  $\Z_2^n$-manifold  of dimension $p|\mathbf{q}$ we know that for a `small enough' $|U| \subset |M|$ there exists an isomorphism of $\Z_2^n$-manifolds
$$\rmh : U \longrightarrow  \mathcal{U}^{p|\mathbf{q}}.$$
This isomorphism allows us to employ local coordinates. A pair $(U, \rmh)$ we refer to as a \emph{(coordinate) chart}, and a family of charts $\{(U_i , \rmh_i)  \}_{i \in \mathcal{I}}$ we refer to as an \emph{atlas}  if the family $\{ |U_i| \}_{i \in \mathcal{I}}$ forms an open cover of $|M|$.  The local structure of a  $\Z_2^n$-manifold  guarantees that an atlas always exists. As a result of the local structure of a $\Z_{2}^{n}$-manifold any morphism $\phi : M \rightarrow N$  can be uniquely specified by a family of local morphisms between $\Z_{2}^{n}$-domains, once atlases on $M$ and $N$ have been fixed. Thus, morphisms of $\Z_{2}^{n}$-manifolds can be fully described using local coordinates. We will regularly exploit this and  employ the standard abuses of notation as found in classical differential geometry when writing morphisms using local coordinates.
\begin{remark}
There is an analogue of the Batchelor--Gaw\c{e}dzki theorem for (real)  $\Z_{2}^{n}$-manifolds,  see \cite[Theorem 3.2]{Covolo:2016a}. That is,  any $\Z_2^n$-manifold is noncanonically isomorphic to a $\Z_2^n\setminus \{ 0\}$-graded vector bundle over  a smooth manifold.  We will not use this theorem in this paper.
\end{remark}
\subsection{Riemannian structures}
For notion in classical Riemannian geometry, we refer the reader to Eisenhart \cite{Eisenhart:1997}. The \emph{tangent sheaf} $\mathcal{T}M$ of a $\Z_2^n$-manifold $M$ is  defined as the sheaf of derivations of sections of the structure sheaf, i.e., $\mathcal{T}M(|U|) := \textnormal{Der}(\cO_M(|U|))$, for arbitrary $|U| \subset |M|$. Naturally, this is  a sheaf of locally free $\cO_{M}$-modules. Global sections of the tangent sheaf are referred  to as \emph{vector fields}. We denote the $\cO_{M}(|M|)$-module of vector fields as $\Vect(M)$.  The dual of the tangent sheaf is the \emph{cotangent sheaf}, which we denote as $\mathcal{T}^*M$. This is also a sheaf of locally free  $\cO_{M}$-modules. Global section of the tangent sheaf we will refer to as \emph{one-forms} and we denote the $\cO_{M}(|M|)$-module of one-forms as $\Omega^1(M)$. For more details about the tangent and cotangent sheaves, the reader should consult \cite{Covolo:2016}.
\begin{definition}\label{def:RiemMet}
A  \emph{Riemannian metric} on a $\Z_2^n$-manifold $M$ is a $\Z_2^n$-homogeneous, $\Z_2^n$-symmetric, non-degenerate, $\cO_M$-linear morphisms of sheaves
$$\langle - | - \rangle_g : ~\mathcal{T}M \otimes_{\cO_M} \mathcal{T}M \longrightarrow \cO_M.$$
A $\Z_2^n$-manifold equipped with a Riemannian metric is referred to as a \emph{Riemannian $\Z_2^n$-manifold}.
\end{definition}
\noindent We will insist that the Riemannian metric is homogeneous with respect to the $\Z_2^n$-degree, and  we will denote the degree of the metric as $\deg(g) \in \Z_2^n$. We will largely formulate  Riemannian geometry globally using vector fields.  \\
\noindent Explicitly, a Riemannian metric has the following properties:
\begin{enumerate}
\setlength\itemsep{0.5em}
\item$\deg(\langle X| Y \rangle_g) = \deg(X) + \deg(Y) + \deg(g)$,\label{MetProp1}
\item $\langle X| Y \rangle_g = (-1)^{\langle\deg(X) , \deg(Y) \rangle} \langle Y| X \rangle_g$,\label{MetProp2}
\item If $\langle X| Y \rangle_g =0$  for all $Y \in \Vect(M)$, then $X =0$,\label{MetProp3}
\item $\langle f X + Y| Z \rangle_g = f \langle X|Z \rangle_g + \langle Y| Z\rangle_g$, \label{MetProp4}
\end{enumerate}
for arbitrary  (homogeneous) $X,Y,Z \in \Vect(M)$ and $f \in C^\infty(M)$.  We will say that a Riemannian metric is \emph{even} if and only if it has total degree zero. Similarly, we will say that a Riemannian metric is \emph{odd} if and only if it has total degree one. Any Riemannian metric we consider will be either even or odd as we will only be considering homogeneous metrics.
\begin{example} Pseudo-Riemannian and Riemannian manifolds are  Riemannian $\Z_2^0$-manifolds. Naturally, there are only even Riemannian metrics in the classical setting.  Similarly,  Riemannian supermanifolds either with an even or odd Riemannian metric are Riemannian $\Z_2$-manifolds.
\end{example}
A Riemannian metric is completely specified by a symmetric rank-two covariant tensor $g = \delta x^I \delta x^J g_{JI}(x)$, where  the components are given by $g_{IJ} :=\langle \partial_I| \partial_J \rangle_g $. Under changes of local coordinates $x^I \mapsto x^{I'}(x)$ the components of the metric transform as
\begin{equation}\label{eqn:TransMet}
g_{I'J'} = (-1)^{\langle \deg(J') , \deg(I) \rangle} \left(\frac{\partial x^I}{\partial x^{I'}}\right) \left(\frac{\partial x^J}{\partial x^{J'}}\right)~  g_{JI}\,,
\end{equation}
where we have  used the symmetry $g_{IJ} = (-1)^{\langle \deg{I}, \deg(J) \rangle} \, g_{JI}$.   The Riemannian metric can then locally be written as
\begin{equation}\label{eqn:LocMet}
 \langle X| Y \rangle_g =  (-1)^{\langle \deg(Y) ,  \deg(I) \rangle} ~ X^I(x)Y^J(x) g_{JI}(x)\,,
 \end{equation}
where  (locally)  $X = X^I \frac{\partial}{\partial x^I}$ and similar for $Y= Y^J \frac{\partial}{\partial x^J}$. A direct computation will show that the above local expression for the metric invariant under changes of coordinates. As customary, we will refer to $g$ as the \emph{metric tensor} or by minor abuse of language, we may refer to $g$ itself as the Riemannian metric. Properties \eqref{MetProp1}, \eqref{MetProp2} and \eqref{MetProp4} of the Riemannian metric are easily seen to hold for the local expression \eqref{eqn:LocMet}. Just as for Riemannian supermanifolds, the non-degeneracy condition places constraints on the number of ``formal directions''.   In particular, the dimension of a supermanifold must be $n|2m$ if it is to admit an even Riemannian metric, and the dimension must be  $n|n$ if it is to admit an odd Riemannian metric. In the continuation of this paper, we will denote a Riemannian $\Z_2^n$-manifold by a pair $(M, g)$.
\begin{proposition}\label{prop:NonDegCon}
Let $(M,g)$ be a Riemannian $\Z_2^n$-manifold. We set $p|\mathbf{q} = p|q_1, q_2, \cdots, q_N =: q_0,q_1, q_2, \cdots , q_N$. Then the non-degeneracy condition \eqref{MetProp3} requires that
\renewcommand\labelenumi{(\roman{enumi})}
\begin{enumerate}
\item $q_i = q_j$ when $\gamma_i + \gamma_j = \deg(g)$,
\item If in addition to the above, $\langle \gamma_i , \gamma_j   \rangle =1$, then $q_i$ (and so $q_j$) must be even integers, i.e., $q_i = 2 n_i$ for some integer $n_i$.
\end{enumerate}
\end{proposition}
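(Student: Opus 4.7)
The strategy is to reduce non-degeneracy of $g$ to a pointwise invertibility condition on a real body matrix on $|M|$, and then perform a block-by-block analysis indexed by the $\Z_2^n$-grading.

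In local coordinates one has $g_{IJ} = \langle \partial_I | \partial_J \rangle_g \in \cO_M$ with $\deg(g_{IJ}) = \gamma_i + \gamma_j + \deg(g)$, writing $\gamma_i = \deg(I)$ and $\gamma_j = \deg(J)$. Non-degeneracy of $g$ amounts to invertibility of the matrix $(g_{IJ})$ as a morphism of free $\cO_M$-modules from vector fields to one-forms. Using the short exact sequence \eqref{eqn:SES} together with $\mathcal{J}_M$-adic completeness of $\cO_M$ (so that $1 + j$ is invertible for $j \in \mathcal{J}_M$), this invertibility is equivalent to pointwise invertibility of the real body matrix $\bigl(\zve(g_{IJ})\bigr)(p)$ for every $p \in |M|$. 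Entries with $\gamma_i + \gamma_j \neq \deg(g)$ lie in $\mathcal{J}_M$ and hence vanish in the body, so only partner degrees $\gamma_j = \gamma_i + \deg(g)$ contribute.

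I would then exploit the involution $\sigma \colon \gamma \mapsto \gamma + \deg(g)$ of $\Z_2^n$, whose orbits have size $1$ (fixed points, forcing $\deg(g) = 0$) or size $2$. After grouping coordinates first by degree and then by $\sigma$-orbit, the body matrix becomes block-diagonal with one block per orbit. For a size-$2$ orbit $\{\gamma_i, \gamma_j\}$ with $i \ne j$, the block has the shape
\[
\begin{pmatrix} 0 & A \\ \varepsilon\, A^{T} & 0 \end{pmatrix}, \qquad \varepsilon = (-1)^{\langle \gamma_i, \gamma_j \rangle},
\]
where $A = \bigl(\zve(g_{IJ})\bigr)$ is the $q_i \times q_j$ sub-block on $\deg I = \gamma_i$, $\deg J = \gamma_j$, and the zero diagonal together with the sign $\varepsilon$ come from the graded symmetry of $g$. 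Such a block is invertible exactly when $A$ is square and invertible, forcing $q_i = q_j$; this is claim (i). For a size-$1$ orbit (fixed point $\gamma_i$ with $2\gamma_i = \deg(g) = 0$), the corresponding block is $q_i \times q_i$ and obeys $g_{IJ} = (-1)^{\langle \gamma_i, \gamma_i \rangle} g_{JI}$: classically symmetric when $\langle \gamma_i, \gamma_i \rangle = 0$ (non-degenerate for any $q_i$), and classically skew-symmetric when $\langle \gamma_i, \gamma_i \rangle = 1$, in which case non-degeneracy forces $q_i$ to be even. This yields claim (ii); in the complementary size-$2$ case the joint antisymmetric block (arising when $\langle \gamma_i, \gamma_j \rangle = 1$ with $i \ne j$) has total size $q_i + q_j = 2q_i$, automatically even, so no further parity constraint appears there.

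The main mildly delicate point is the reduction to body-invertibility, since $\mathcal{J}_M$ is not nilpotent in $\Z_2^n$-geometry for $n \ge 2$ (even-degree formal generators such as $\zx^{(1,1)}$ fail to be nilpotent): one must invoke $\mathcal{J}_M$-adic completeness in place of the naive nilpotent geometric-series argument used in ordinary super-geometry. Once this reduction is in hand, the remainder of the argument is elementary linear algebra over $\R$ with careful bookkeeping of the graded symmetry signs.
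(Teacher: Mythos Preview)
Your argument is correct and follows the same route as the paper: reduce non-degeneracy to invertibility of the body matrix $\zve(g_{IJ})$, observe that only blocks with $\gamma_i+\gamma_j=\deg(g)$ survive, and conclude (i) from squareness and (ii) from classical skew-symmetry. You are in fact more careful than the paper on two points: you make explicit that the body-invertibility criterion rests on $\mathcal{J}_M$-adic completeness rather than nilpotence (the paper simply cites \cite{Covolo:2012,Leites:1980}), and you separate the size-$1$ orbits (where the parity constraint (ii) genuinely bites) from the size-$2$ orbits (where the combined block has even size $2q_i$ automatically), whereas the paper's proof treats only the diagonal case $\langle\gamma_i,\gamma_i\rangle=1$.
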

\begin{proof}
As this is a local question, it is sufficient to examine the invertibility of the metric tensor $g_{IJ}$ in  some chosen set of local coordinates.  The $i,j$ block of $g$ is of $\Z_2^n$-degree $\gamma_i + \gamma_j + \deg(g)$. The entries of this matrix are in $\cO_M(|U|)$ for some ``small enough'' $|U| \subset |M|$. It is known that any such matrix is invertible if and only if the underlying real matrix $\epsilon_{|U|}(g)$ is itself invertible (see \cite{Covolo:2012} and for the super-case see \cite{Leites:1980}). Here $\epsilon : \cO_M  \rightarrow C^\infty_{|M|}$ is the canonical sheaf morphisms.  Under this projection, the only non-zero blocks are those for which $\gamma_i + \gamma_j = \deg(g)$.
\renewcommand\labelenumi{(\roman{enumi})}
\begin{enumerate}
\item Each block itself must be invertible and so a square matrix. Thus $q_i = q_j$ whenever $\gamma_i + \gamma_j = \deg(g)$.
\item The symmetry of the metric tensor implies that if $\langle \gamma_i, \gamma_i\rangle =1$, these  blocks are, in classical terminology,  skew-symmetric matrices. This directly implies that the $q_i$ must be even.
\end{enumerate}
\end{proof}
\begin{lemma}\label{lem:VecOneform}
Let $(M,g)$ be a Riemannian $\Z_2^n$-manifold. Then there is a one-to-one correspondence between elements of $\Vect(M)$ and $\Omega^1(M)$, where a one-form $\omega$ corresponds to a vector field $X$ if and only if
$$\omega(Z) = \langle Z |X \rangle_g,$$
for all $Z \in \Vect(M)$.
\end{lemma}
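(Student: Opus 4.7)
The plan is to establish the $\Z_2^n$-graded analogue of the classical musical isomorphism. Define $\flat \colon \Vect(M) \to \Omega^1(M)$ by $X \mapsto \omega_X$, where $\omega_X(Z) := \langle Z \,|\, X \rangle_g$ for every $Z \in \Vect(M)$. For homogeneous $X$, the assignment $Z \mapsto \langle Z|X\rangle_g$ is $\cO_M$-linear in $Z$ in the $\Z_2^n$-graded sense by property \eqref{MetProp4}, and shifts the degree by $\deg(X) + \deg(g)$; sheafifying, this defines a morphism of sheaves of $\cO_M$-modules. Injectivity of $\flat$ is immediate: if $\omega_X = 0$ then $\langle Z|X\rangle_g = 0$ for all $Z$, so $X = 0$ by the non-degeneracy property \eqref{MetProp3}.

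The bulk of the work is surjectivity, which is a local question. Fix a homogeneous one-form $\omega \in \Omega^1(M)$ of degree $\deg(\omega)$ and a chart $(U, \rmh)$ with coordinates $x^I$, writing $\omega|_U = \omega_I \, \delta x^I$. We seek a local vector field $X|_U = X^J \partial_J$, homogeneous of degree $\deg(\omega) + \deg(g)$, satisfying $\omega|_U(Z) = \langle Z | X|_U \rangle_g$ for every local $Z$. Comparing coefficients via \eqref{eqn:LocMet} reduces this to a linear system of the form
$$\omega_I = (-1)^{s(I,J)}\, X^J\, g_{JI},$$
with signs $s(I,J)$ dictated by the $\Z_2^n$-degrees of $\omega$, $X$, and the indices. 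As noted in the proof of Proposition \ref{prop:NonDegCon}, the matrix $(g_{IJ})$ is invertible over $\cO_M(|U|)$: its image under the canonical projection $\varepsilon$ from \eqref{eqn:SES} is block-invertible, and a $\Z_2^n$-graded matrix with entries in $\cO_M(|U|)$ is invertible precisely when its classical reduction is. Solving the system in terms of the inverse components $g^{IJ}$ yields a local vector field $X_U$, and the injectivity of $\flat$ established above forces these local solutions to agree on overlaps; they therefore glue to a global $X \in \Vect(M)$ with $\flat(X) = \omega$.

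No genuine obstacle arises, since the argument is essentially a $\Z_2^n$-graded transcription of the classical musical isomorphism, made possible by the local invertibility of $g_{IJ}$. The main point of care is the bookkeeping of the sign conventions $(-1)^{\langle\cdot,\cdot\rangle}$ arising from the graded symmetry of $g$ and the transformation rule \eqref{eqn:TransMet}, in order to verify that the locally defined $X_U$ is genuinely homogeneous of the claimed degree, $\cO_M(|U|)$-linear in the right sense, and independent of the chart chosen.
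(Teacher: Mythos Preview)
Your proposal is correct and follows essentially the same approach as the paper: both arguments reduce the correspondence to the local invertibility of the matrix $(g_{IJ})$ and write out the relation $\omega_I = (-1)^{\langle \deg(X), \deg(I)\rangle}\, X^J g_{JI}$ in coordinates. Your version is more detailed---you separate injectivity and surjectivity, invoke the invertibility criterion for $\Z_2^n$-graded matrices explicitly, and argue the gluing step---whereas the paper simply states that the result ``directly follows from the non-degeneracy of the Riemannian metric'' and records the coordinate formula.
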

 \begin{proof}
 This directly follows from the non-degeneracy of the Riemannian metric.  In local coordinates  we have
 $$\omega(Z) = Z^I \omega_I = (-1)^{\langle \deg(X), \deg(I)  \rangle} ~ Z^I X^J g_{JI},$$
 and so $\omega_I = (-1)^{\langle \deg(X), \deg(I)  \rangle} ~ X^J g_{JI}$.
 \end{proof}
 Note that $\deg(\omega) = \deg(X) + \deg(g)$ and so the above correspondence does not preserve the $\Z_2^n$-degree unless the metric is degree zero.
 \begin{proposition}\label{prop:RedMet}
 Let $(M,g)$ be a Riemannian $\Z_2^n$-manifold with a degree zero Riemannian metric, i.e., $\deg(g) = (0,0,\cdots , 0)$.  Then the reduced manifold $|M|$ comes naturally equipped with an induced pseudo-Riemannian structure $|g|$.
 \end{proposition}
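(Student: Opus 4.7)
The plan is to define $|g|$ coordinate-wise as the image of the degree-zero block of $g$ under the canonical projection $\zve : \cO_M \to C^\infty_{|M|}$ from the short exact sequence \eqref{eqn:SES}. Fix local coordinates $(x^a, \zx^A)$ on an open $\Z_2^n$-submanifold, with $x^a$ the degree-$(0,\dots,0)$ coordinates that double as local coordinates on $|M|$. I would set
$$|g|_{ab}(x) := \zve\!\left( g_{ab}(x,\zx) \right), \qquad g_{ab} := \langle \partial_a \,|\, \partial_b \rangle_g.$$
These are smooth real-valued functions, and the graded symmetry of $g$ together with $\langle 0,0\rangle = 0$ immediately gives $|g|_{ab} = |g|_{ba}$.

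Next I would check that the local components glue into a well-defined rank-two covariant tensor on $|M|$. Any admissible change of coordinates preserves the $\Z_2^n$-grading, so $\partial x^I / \partial x^{a'}$ is homogeneous of degree $\deg(I)$; for $\deg(I) \neq 0$ this derivative lies in the ideal $\mathcal{J}_M$ and is annihilated by $\zve$. Applying $\zve$ to the transformation rule \eqref{eqn:TransMet} (whose sign is $+1$ because $\deg(J')=0$), only the terms with $\deg(I) = \deg(J) = 0$ survive, producing the classical rule
$$|g|_{a'b'}(x') = \frac{\partial x^a}{\partial x^{a'}} \frac{\partial x^b}{\partial x^{b'}} \, |g|_{ab}(x).$$
Hence $|g|$ defines a globally well-defined, smooth, symmetric rank-two covariant tensor on $|M|$.

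For non-degeneracy, I would reuse the block-decomposition argument from the proof of Proposition \ref{prop:NonDegCon}: invertibility of the matrix $(g_{IJ})$ over $\cO_M$ is equivalent to invertibility of the underlying real matrix $(\zve(g_{IJ}))$. Since $\deg(g) = 0$, the morphism $\zve$ kills every entry with $\deg(I) \neq \deg(J)$, so $(\zve(g_{IJ}))$ is block-diagonal with one block per $\gamma_i \in \Z_2^n$. Invertibility of the whole matrix therefore forces invertibility of the $\gamma_0 = 0$ block, which is exactly $(|g|_{ab})$. This is the required non-degeneracy of $|g|$.

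The one delicate point, and the main obstacle to keep in mind, is the coordinate-invariance step: for $n \geq 2$ the structure sheaf contains formal coordinates of non-zero $\Z_2^n$-degree that are \emph{not} nilpotent, so one cannot discard contributions by a naive Taylor truncation as in classical supergeometry. Instead the argument must proceed through the ideal $\mathcal{J}_M$ and the fact that $\zve$ annihilates it, in order to cleanly kill the non-tensorial pieces of the Jacobian.
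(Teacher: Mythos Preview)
Your proposal is correct and follows essentially the same approach as the paper: define $|g|$ by applying $\zve$ to the degree-zero block of the metric and verify symmetry and non-degeneracy via the block structure of $\zve(g_{IJ})$. The paper phrases the definition globally via degree-zero lifts of vector fields on $|M|$ (which sidesteps the explicit gluing check you perform), but then immediately passes to the same local formula $|g|_{ab} = \zve_{|U|} g_{ab}$ and invokes the same invertibility argument; your coordinate-wise treatment with the explicit transformation-rule computation is a bit more thorough but not a different idea.
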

\begin{proof}
Let $\bar{X}$ and $\bar{Y} \in \Vect(|M|)$ be vector fields on $|M|$. We then define the reduced pseudo-Riemannian metric as
$$\langle \bar{X} | \bar{Y} \rangle_{|g|} = \epsilon_{|M|}\big( \langle X| Y  \rangle_g \big)\,,$$
for any degree zero $X$ and $Y \in \Vect(M)$ such that $\bar{X} = X \circ \epsilon_{|M|}$ and $\bar{Y} = Y \circ \epsilon_{|M|}$ (see \eqref{eqn:SES}). This construction gives a pseudo-Riemannian metric on $|M|$ as all the required properties are inherited  from that of the original Riemannian metric. Specifically, with respect to some chosen local coordinates we have
$$|g|_{ab} = \epsilon_{|U|}g_{ab},$$
here $|U| \in |M|$ is chosen `small enough' to allow the use of local coordinates. It is clear that as $g_{IJ}$ is an invertible degree zero matrix, $|g|_{ab}$ is itself invertible and so non-zero.  The symmetry $|g|_{ab} = |g|_{ba}$ is also evident.
\end{proof}
\begin{remark}
Proposition  \ref{prop:RedMet} does \emph{not} generalise to non-zero $\Z_2^n$-degree Riemannian metrics.  For example, an even Riemannian metric on a supermanifold induces a (pseudo-)Riemannian on the reduced manifold via the above construction, while an odd Riemannian metric cannot induce such a structure on the reduced manifold.
\end{remark}
\begin{example}
Consider the  $\Z_2^2$-domain $\R^{2|1,2,2}$ equipped with global coordinates  $$(\underbrace{x^a}_{(0,0)} ,\,  \underbrace{z}_{(1,1)},\,  \underbrace{\zx^b}_{(0,1)}, \, \underbrace{\eta^c}_{(1,0)} ),$$
 where $a,b$ and $c =1,2$.  This $\Z_2^2$-manifold can can be equipped with a degree $(0,0)$  Riemannian metric
$$g = (\delta x^1)^2 + (\delta x^2)^2 + (\delta z)^2 + 2 \, \delta \zx^1 \delta \zx^2 +2 \, \delta \eta^1 \delta \eta^2. $$
The reduced Riemanian metric on the reduced manifold is the standard Euclidean metric on $\R^2$. 
\end{example}
\begin{example}
Recall that the open disk is defined as $\mathbb{D} := \{ (x^1, x^2) \in \R^2~ | ~ (x^1)^2 + (x^2)^2 < 1\}$. We define the $\Z_2^2$-disk as 
$$\mathbb{D}^{2|1,2,2} := \big( \mathbb{D}, ~~ C^\infty_{\mathbb{D}}(-)\widehat{\otimes}_\R ~\R[[z, \zx^a, \eta^b]] \big),$$
where $a$ and $b = 1,2$. The algebras here are nuclear Fr\'{e}chet algebras, and as such the tensor product is completed with respect to any reasonable  topology, for example  the projective topology. By construction the reduced manifold is the open disk.  We can equip this $\Z_2^2$-manifold with a degree $(0,0)$ Riemannian metric given by 
$$g = 4 \, \frac{(\delta x^1)^2 + (\delta x^2)^2 + (\delta z)^2 + 2 \, \delta \zx^1 \delta \zx^2 +2 \, \delta \eta^1 \delta \eta^2}{(1- (x^1)^2 - (x^2)^2   - z^2-\zx^1 \zx^2 - \eta^1 \eta^2 )^2}. $$
Note that the denominator has non-zero real part and so is invertible. The reduced Riemannian metric is the metric of the Poncar\'{e}  disk model. 
\end{example}

\subsection{The Cartesian and warped product of Riemannian $\Z_2^n$-manifolds}
It is known that the category of $\Z_2^n$-manifolds admit all finite products, for details the reader should consult \cite{Bruce:2018b}.  In particular, given two $\Z_2^n$-manifolds $M_1$ and $M_2$, their Cartesian product $M_1 \times M_2$ is well-defined and is itself a $\Z_2^n$-manifold. One can construct  an atlas on $M_1 \times M_2$ using atlases on $M_1$ and $M_2$ in more-or-less the same way as one can in the category of supermanifolds.   \par
We will denote the projections as $\pi_i : M_1 \times M_2 \rightarrow M_i$ ($i = 1,2$). The underlying continuous map $|\pi_i| : |M_1| \times| M_2| \rightarrow |M_i| $ is the standard smooth projection.  Using a basis of the topology for both $M_1$ and $M_2$ consisting of small enough opens $|U_i| \subset |M_i|$, such that the corresponding open $\Z_2^n$-submanifolds are diffeomorphic to coordinate domains, we see that
$$\pi^*_{i, |U_i|} : \cO_{M_i}(|U_i|): \cO_{M_1}(|U_1|) ~\widehat{\otimes}_{\R}~\cO_{M_2}(|U_2|) $$
is the standard inclusion of algebras.  As the algebras in question are nuclear Fr\'{e}chet, the tensor product is completed with respect to any reasonable  topology, say for concreteness, the projective topology (for details of the functional analytic properties of the structure sheaf of a $\Z_2^n$-manifold, the reader should consult \cite{Bruce:2018}).
\begin{proposition}
Let $(M_i, g_i)$ $(i = 1,2)$ be Riemannian $\Z_2^n$-manifolds whose Riemannian metric are of the same $\Z_2^n$-degree. Then,
$$(M_1 \times M_2, g := \pi^*_1 g_1 + \pi^*_2 g_2 )$$
is itself a  Riemannian $\Z_2^n$-manifold whose Riemannian metric is of degree $\deg(g) = \deg(g_i)$.
\end{proposition}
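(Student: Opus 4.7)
The plan is to verify the four defining properties of a Riemannian metric locally, using the product structure of $M_1 \times M_2$. First I would fix coordinate charts on $M_1$ and $M_2$ giving coordinates $x^I$ on $|U_1| \subset |M_1|$ and $y^\alpha$ on $|U_2| \subset |M_2|$. The product then carries natural coordinates $(x^I, y^\alpha)$ on $U_1 \times U_2$, and its tangent sheaf is locally freely generated over $\cO_{M_1 \times M_2}(|U_1| \times |U_2|)$ by $\{\partial_{x^I}, \partial_{y^\alpha}\}$.

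Next I would compute the local component matrix of $g = \pi_1^* g_1 + \pi_2^* g_2$. Since $\pi_1^*$ identifies coordinates on $M_1$ with the corresponding ``horizontal'' coordinates on the product (and $\partial_{y^\alpha}$ annihilates every element in the image of $\pi_1^*$, symmetrically for $\pi_2$), the pullback $\pi_1^* g_1$ is block-diagonal with the matrix $(g_1)_{IJ}(x)$ in the $xx$-slot and zeros elsewhere, and $\pi_2^* g_2$ has $(g_2)_{\alpha\beta}(y)$ in the $yy$-slot. Consequently $g$ has local matrix
\begin{equation*}
[g] \;=\; \begin{pmatrix} (g_1)_{IJ}(x) & 0 \\ 0 & (g_2)_{\alpha\beta}(y) \end{pmatrix}.
\end{equation*}

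From this block-diagonal form, properties \eqref{MetProp1}, \eqref{MetProp2} and \eqref{MetProp4} from Definition \ref{def:RiemMet} are inherited block-wise: homogeneity of common degree $\deg(g_1) = \deg(g_2)$, $\Z_2^n$-symmetry, and $\cO_{M_1 \times M_2}$-linearity each reduce at once to the corresponding assertions for $g_1$ and $g_2$, and the identification is independent of the chart thanks to the transformation rule \eqref{eqn:TransMet} applied separately in each block.

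The only step demanding genuine argument is the non-degeneracy \eqref{MetProp3}. Here I would invoke the criterion used already in the proof of Proposition \ref{prop:NonDegCon}: a matrix with entries in $\cO_{M_1 \times M_2}(|U_1| \times |U_2|)$ is invertible if and only if its image under the canonical projection $\epsilon$ is invertible as a real matrix. Since $\epsilon([g])$ is block-diagonal with blocks $\epsilon((g_1)_{IJ})$ and $\epsilon((g_2)_{\alpha\beta})$, invertibility of $[g]$ is equivalent to the simultaneous invertibility of these two blocks, which is precisely the non-degeneracy of $g_1$ and $g_2$. The subtlest point — and the one I expect to require the most care — is confirming that $\pi_i^* g_i$ is well-defined as a global symmetric covariant $2$-tensor on the product, so that local block-diagonal pieces glue consistently; this follows from the functorial description of the product of $\Z_2^n$-manifolds and the induced maps on tangent and cotangent sheaves, as developed in the references cited just before the statement.
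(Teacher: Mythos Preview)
Your proposal is correct and follows essentially the same approach as the paper: write $\langle X|Y\rangle_g$ as the sum of the two pulled-back pairings and verify the four axioms in Definition~\ref{def:RiemMet}. The paper dispatches non-degeneracy in a single sentence (``directly follows as the initial structures are themselves non-degenerate''), whereas you make the block-diagonal local matrix explicit and invoke the invertibility criterion via $\epsilon$ from the proof of Proposition~\ref{prop:NonDegCon}; this is a more careful execution of the same idea rather than a different route.
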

\begin{proof}
It is clear that this construction means that
$$\langle X | Y \rangle_g  = \langle X | Y \rangle_{\pi^*_1g_1} + \langle X | Y \rangle_{\pi^*_2g_2},$$
for all $X$ and $Y \in \Vect(M_1 \times M_2)$.  We then need to check that this has the right properties. It is clear that we have the required $\Z_2^n$-degree, that this is symmetric and is linear.  The non-degeneracy also directly follows as the initial structures are themselves non-degenerate.
\end{proof}
By employing local coordinates $x^I$ on $M_1$ and $y^A$ on $M_2$, the Riemannian metric is given locally by
$$\langle X | Y \rangle_g = (-1)^{\langle \deg(Y) ,  \deg(I) \rangle} ~ X^I(x,y)Y^J(x,y) g_{JI}(x) + (-1)^{\langle \deg(Y) ,  \deg(A) \rangle} ~ X^A(x,y)Y^B(x,y) g_{BA}(y)\,,$$
or as a tensor
$$g =  \delta x^I \delta x^J g_{JI}(x) +  \delta y^A \delta y^B g_{BA}(y).$$
This construction generalises to a warped product.
\begin{definition}
Let $(M_i, g_i)$ $(i = 1, 2)$ be Riemannian $\Z_2^n$-manifolds whose Riemannian metric are of the same $\Z_2^n$-degree. Let $\mu \in C^\infty(M_1)$ be an invertible global function that is strictly positive, i.e. $\epsilon_{M_1}\mu$ is a strictly positive function on $|M_1|$. Then the \emph{warped product} is defined as
$$M_1 \times_\mu M_2 := \big( M_1 \times M_2 ,  ~  g := \pi^*_1 g_1 + (\pi^*_1 \mu) \, \pi^*_2 g_2 \big ).$$
\end{definition}
Note that if $\mu$ is the constant function of value one, then the warped product is simply the Cartesian product.
\begin{proposition}
The warped product $M_1 \times_\mu M_2$ is a Riemannian $\Z_2^n$-manifold.
\end{proposition}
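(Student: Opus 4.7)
The plan is to reduce this to the Cartesian product case by treating the warping factor $\mu$ as a harmless multiplier. The key observation is that, since $\epsilon_{M_1}\mu > 0$, the function $\mu$ must be $\Z_2^n$-homogeneous of degree zero (otherwise its image under $\epsilon_{M_1}$ would vanish). In particular, $\pi_1^*\mu$ is a degree-zero invertible global function on $M_1 \times M_2$ that commutes with every section of $\cO_{M_1 \times M_2}$, so multiplying $\pi_2^* g_2$ by $\pi_1^* \mu$ neither alters its $\Z_2^n$-degree nor disturbs symmetry or $\cO_M$-bilinearity.

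First, I would check the three easy properties. Writing $\langle X|Y\rangle_g = \langle X|Y\rangle_{\pi_1^* g_1} + (\pi_1^*\mu)\,\langle X|Y\rangle_{\pi_2^* g_2}$ for $X,Y\in \Vect(M_1\times M_2)$, the degree condition \eqref{MetProp1} holds with $\deg(g)=\deg(g_1)=\deg(g_2)$ because $\deg(\pi_1^*\mu)=0$. The $\Z_2^n$-symmetry \eqref{MetProp2} is inherited from the pullback metrics and the fact that multiplication by the degree-zero element $\pi_1^*\mu$ introduces no Koszul sign. The $\cO_M$-linearity \eqref{MetProp4} follows in the same way, again using that $\pi_1^*\mu$ is central (degree zero) so it can be moved freely past the arguments.

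The main point, and the only place where something new happens relative to the previous proposition, is non-degeneracy \eqref{MetProp3}. In local coordinates $x^I$ on a chart of $M_1$ and $y^A$ on a chart of $M_2$, the metric tensor of $g$ is block-diagonal,
\begin{equation*}
g_{\bullet\bullet} \;=\; \begin{pmatrix} g_{1,\,JI}(x) & 0 \\ 0 & (\pi_1^*\mu)(x)\, g_{2,\,BA}(y) \end{pmatrix}.
\end{equation*}
Following the argument of Proposition~\ref{prop:NonDegCon}, it suffices to verify that each block is invertible as a matrix with entries in $\cO_{M_1\times M_2}(|U_1|\times |U_2|)$. The upper block is invertible because $g_1$ is a Riemannian metric. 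For the lower block, observe that $\pi_1^*\mu$ is invertible in the structure sheaf: indeed $\epsilon(\pi_1^*\mu) = \epsilon_{M_1}\mu$ is a nowhere-vanishing smooth function on the base, so $\pi_1^*\mu$ lies outside the ideal $\mathcal{J}_{M_1\times M_2}$ and is therefore a unit. Hence $(\pi_1^*\mu)\, g_{2,BA}$ is the product of an invertible scalar and the invertible matrix $g_{2,BA}$, and is itself invertible, with inverse $(\pi_1^*\mu)^{-1}\, g_2^{BA}$.

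I expect no serious obstacle: the only subtlety is confirming that strict positivity of $\epsilon_{M_1}\mu$ forces $\mu$ to be degree zero and invertible in $\cO_{M_1}(|M_1|)$, which are precisely the two properties needed to import the Cartesian-product argument.
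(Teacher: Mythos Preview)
Your proposal is correct and follows essentially the same line as the paper's proof: verify homogeneity, symmetry and $\cO_M$-linearity by inheritance from $g_1,g_2$ together with $\deg(\pi_1^*\mu)=0$, and then reduce non-degeneracy to the invertibility of $\pi_1^*\mu$ combined with the non-degeneracy of the two factor metrics. Your version is somewhat more explicit---you pass to the block-diagonal local matrix and invoke the invertibility criterion of Proposition~\ref{prop:NonDegCon}, whereas the paper argues directly at the level of the bilinear form---but the substance is the same. One small remark: the definition already \emph{assumes} $\mu$ is invertible, so you may simply note that $\pi_1^*$ sends units to units rather than re-deriving invertibility from strict positivity; also, your sentence ``otherwise its image under $\epsilon_{M_1}$ would vanish'' tacitly presupposes $\mu$ is homogeneous, which is really forced by the requirement that $g$ be homogeneous rather than by positivity alone.
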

\begin{proof}
It is clear that the definition of the warped product means that
$$\langle X | Y \rangle_g  = \langle X | Y \rangle_{\pi^*_1g_1} + (\pi^*_1\mu) \,\langle X | Y \rangle_{\pi^*_2g_2},$$
for all $X$ and $Y \in \Vect(M_1 \times M_2)$.  We then need to check that this has the right properties. It is clear that we have the required $\Z_2^n$-degree, i.e., the Riemannian metric is homogeneous.  The symmetry  and  linearity directly from the properties of the starting pair of Riemannian metrics.  Note that $\pi^*_1\mu$, as it is invertible has a non-zero real component. Thus the second term in the sum will not vanish unless $\langle X | Y \rangle_{\pi^*_2g_2}$  vanishes. Thus, we are back to the non-degeneracy condition of the pair of Riemannian metrics.
\end{proof}
\begin{example}\label{exp:WarpedMetric}
It is easy to observe that $\R^{1|1,1,1} =\R^{1|1,0,0} \times \R^{0|0,1,1}$. Moreover, in global coordinates $(x, z, \zx, \eta)$ the canonical degree $(1,1)$ Riemannian metric is $g_0 = \delta x\, \delta z + \delta \eta\, \delta \zx$.  Note that this is, of course, the sum of the two Riemannian metric on each factor, i.e., we have a Cartesian product of Riemannian $\Z_2^2$-manifold. However, we can warp this construction.  We  choose,  somewhat arbitrarily, $\mu(x,z) = \exp\big( (x^2 + z^2)/k^2\big)$, with  $k\in  \R$ a non-zero constant. The exponential of a formal variable is understood as a power series, so explicitly 
$$\mu(x,z) = \sum_{i = 0 }\frac{1}{n!} \left( \frac{z}{k}\right)^{2n} \, \exp \left( \frac{x^2}{k^2}\right).$$
The  $\Z_2^2$-manifold $\R^{1|1,1,1}$ can then be equipped with the warped Riemannian metric
$$g = \delta x\, \delta z + \delta \eta\, \delta \zx ~\exp \left( \frac{x^2+ z^2}{k^2}\right).$$
Note that  in the limit $k \rightarrow \infty $ the warping factor goes likes $\mu(x,z) \rightarrow 1$ and so we recover the canonical metric.
\end{example}

\subsection{Isometries and Killing vector fields}
The notions of isometries and infinitesimal isometries of a Riemannian $\Z_2^n$-manifold are the same as the classical notions.
\begin{definition}
Let $(M,g)$ and $(N,h)$ be Riemannian $\Z_2^n$-manifolds. Then a diffeomorphism  $\phi : M \rightarrow N$ is said to be an \emph{isometry} if and only if
$$\langle X | Y \rangle_g = \langle \phi_*X | \phi_*Y \rangle_h \,,$$
for all $X$ and $Y \in \Vect(M)$.
\end{definition}
\noindent \textbf{Observations:} As we are dealing with diffeomorphisms it is clear that $\dim(M) = \dim(N)$. Moreover, as $\deg(X) = \deg(\phi_* X)$ and similar for $Y$, it must be the case that $\deg(g) = \deg(h)$. Thus, it is impossible to have isometries  between Riemannian $\Z_2^n$-manifolds that have Riemannian metrics of different $\Z_2^n$-degrees.
\smallskip

Moving on to infinitesimal symmetries, any homogeneous  vector field $X \in \Vect(M)$ defines an infinitesimal local diffeomorphism, which in coordinates is given by
$$\phi^*_\lambda x^I = x^I + \lambda  \, X^I(x)\,,$$
where $\deg(\lambda) = \deg(X)$ is a formal parameter taken to $\Z_2^n$-commute with the coordinates.  This induces the map
$$\delta x^I \mapsto \delta x^I  + \lambda \, \delta x^J \, \frac{\partial X^I}{\partial x^J}\,.$$
It is then a direct calculation to show that  under such a local diffeomorphism  the metric changes as
\begin{align*}
g \mapsto & g + \lambda \, \delta x^J \delta x^I \left((-1)^{\langle \deg(X), \deg(I)\rangle} \, \frac{\partial X^K}{\partial x^J} g_{KI} + (-1)^{\langle \deg(X)+ \deg(J), \deg(I)\rangle} \, \frac{\partial X^K}{\partial x^I} g_{KJ} \right.\\
& \left. + (-1)^{\langle \deg(X), \deg(I) + \deg(J)\rangle}\, X^K \frac{\partial g_{IJ}}{\partial x^K} \right) + \cO(\lambda^2).
\end{align*}
The Lie derivative is then defined as the $\cO(\lambda)$ term of the above, i.e.,
$$ (L_Xg)_{IJ} =  (-1)^{\langle \deg(X), \deg(I)\rangle} \, \frac{\partial X^K}{\partial x^J} g_{KI} + (-1)^{\langle \deg(X)+ \deg(J), \deg(I)\rangle} \, \frac{\partial X^K}{\partial x^I} g_{KJ}
+ (-1)^{\langle \deg(X), \deg(I) + \deg(J)\rangle}\, X^K \frac{\partial g_{IJ}}{\partial x^K}. $$
Here, $L_X$ is the \emph{Lie derivative along $X$}, defined on the sheaf of tensors fields on $M$ analogously to the case of standard supermanifolds.
In particular, $L_X$ is the derivation of degree $(\deg(X),0)$ on the sheaf of covariant tensor fields (which is a sheaf of $\Z_2^n\times\mathbb{N}$-graded (tensor) algebras) such that $L_X(f)=X(f)$ and $L_X(\xd f)=\xd(X(f))$ for $f\in\cO_M$:
$$L_X(\za\ot\zb)=L_X(\za)\ot\zb+(-1)^{\langle \deg(X), \deg(\za)\rangle}\za\ot L_X(\zb)\,.$$
A Killing vector field is then a vector field whose Lie derivative annihilates the metric. More formally we have the following definition.
\begin{definition}
A vector field $X$ on a Riemannian $\Z_2^n$-manifold $(M,g)$ is said to be a \emph{Killing vector field} if and only if
$$L_Xg =0\,.$$
\end{definition}

Just as in the classical and super-case, we have the following result.
\begin{proposition}
Let $(M,g)$ be a Riemannian $\Z_2^n$-manifold. The space of Killing vector fields forms a $\Z_2^n$-Lie algebra under the standard $\Z_2^n$-graded Lie bracket of vector fields.
\end{proposition}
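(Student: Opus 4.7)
The plan is to mimic the classical argument, the only substantial point being to verify the $\Z_2^n$-graded commutator identity
\[
L_{[X,Y]} \;=\; L_X\circ L_Y \;-\; (-1)^{\langle \deg(X),\deg(Y)\rangle}\,L_Y\circ L_X
\]
for the Lie derivative acting on the sheaf of covariant tensor fields. Once this identity is in place, the stability of the Killing condition under the bracket is immediate: if $X,Y$ are homogeneous Killing vector fields with $L_Xg=0=L_Yg$, then
\[
L_{[X,Y]}g \;=\; L_X(L_Yg)\;-\;(-1)^{\langle \deg(X),\deg(Y)\rangle}L_Y(L_Xg)\;=\;0.
\]
Moreover, $\deg([X,Y])=\deg(X)+\deg(Y)$ in $\Z_2^n$, so the bracket respects the grading.

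To set this up carefully, I would first note that for each $\zg\in\Z_2^n$ the set of homogeneous Killing vector fields of degree $\zg$ is an $\R$-linear subspace of $\Vect(M)$, because $X\mapsto L_Xg$ is $\R$-linear in $X$ (and satisfies $L_{fX}g=fL_Xg+\xd f\ot\dots$ etc., but $\R$-linearity is all that is needed at the level of the underlying vector space). Summing over $\zg\in\Z_2^n$ yields a $\Z_2^n$-graded $\R$-vector subspace $\mathfrak{k}(M)\subset\Vect(M)$.

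The graded commutator identity I would obtain by the usual derivation argument. Since $L_X$ was defined as the unique derivation of degree $(\deg(X),0)$ on the $\Z_2^n\times\mathbb{N}$-graded tensor algebra satisfying $L_X f=X(f)$ and $L_X(\xd f)=\xd(X(f))$, the graded commutator $[L_X,L_Y]:=L_X L_Y-(-1)^{\langle\deg(X),\deg(Y)\rangle}L_Y L_X$ is again a derivation of degree $(\deg(X)+\deg(Y),0)$ on the same tensor algebra. On functions one computes directly
\[
[L_X,L_Y]f \;=\; X(Y(f))-(-1)^{\langle\deg(X),\deg(Y)\rangle}Y(X(f)) \;=\; [X,Y](f)\;=\;L_{[X,Y]}f,
\]
and on exact one-forms $[L_X,L_Y]\xd f=\xd\bigl([X,Y](f)\bigr)=L_{[X,Y]}\xd f$. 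By the uniqueness of derivations with prescribed action on $f$ and $\xd f$, this forces $[L_X,L_Y]=L_{[X,Y]}$ globally on covariant tensors. I expect this uniqueness/derivation step to be the only delicate point; everything else is a bookkeeping exercise with signs.

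Finally, the $\Z_2^n$-graded skew-symmetry $[X,Y]=-(-1)^{\langle\deg(X),\deg(Y)\rangle}[Y,X]$ and the $\Z_2^n$-graded Jacobi identity for $[-,-]$ are already known to hold on the whole of $\Vect(M)$, so they descend automatically to $\mathfrak{k}(M)$. Combining this with the closure under the bracket proved above yields that $\mathfrak{k}(M)$ is a $\Z_2^n$-Lie subalgebra of $\bigl(\Vect(M),[-,-]\bigr)$, which is the required statement.
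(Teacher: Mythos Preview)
Your proof is correct and follows the same strategy as the paper: reduce closure under the bracket to the identity $L_{[X,Y]}=[L_X,L_Y]$ on covariant tensors. The only difference is tactical: the paper verifies this identity by an explicit Leibniz computation on a decomposable tensor $g=\za\otimes\zb$, whereas you invoke the uniqueness of a derivation prescribed on functions and exact one-forms; your route is arguably cleaner since it uses directly the paper's own characterisation of $L_X$.
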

\begin{proof}
This follows directly from $L_{[X,Y]} = [L_X, L_Y]$ for all vector fields $X$ and $Y \in \Vect(M)$.  Indeed, this is clearly true for the action on 1-forms. Therefore, if $g=\za\ot\zb$ for 1-forms $\za$ and $\zb$, then
\beas&L_X\circ L_Y(g)=L_X\circ L_Y(\za\ot\zb)
=L_X\circ L_Y(\za)\ot\zb+(-1)^{\langle \deg(X), \deg(\za)+\deg(Y)\rangle}L_Y(\za)\ot L_X(\zb)\\
&-(-1)^{\langle \deg(Y), \deg(\za)\rangle}L_X(\za)\ot L_Y(\zb)+(-1)^{\langle \deg(Y)+\deg(X), \deg(\za)\rangle}(\za)\ot L_X\circ L_Y(\zb)
\eeas
and consequently
\beas&[L_X, L_Y](g)=\left(L_X\circ L_Y-(-1)^{\langle \deg(X), \deg(Y)\rangle}L_Y\circ L_X\right)(\za\ot\zb)\\
&=[L_X, L_Y](\za)\ot\zb+(-1)^{\langle \deg(Y)+\deg(X), \deg(\za)\rangle}\za\ot [L_X, L_Y](\zb)\\
&=L_{[X,Y]}(\za)\ot\zb+(-1)^{\langle \deg(Y)+\deg(X), \deg(\za)\rangle}\za\ot L_{[X, Y]}(\zb)=L_{[X, Y]}(g)\,.
\eeas
\end{proof}
\begin{example}
Consider a  $\Z_2^n$-manifold of the form $M = N \times \R^2$. Such a  $\Z_2^n$-manifold is a pp-wave type $\Z_2^n$-manifold if it is a Riemannian $\Z_2^n$-manifold with a degree zero Riemannian metric, such that in any coordinate system $(x^I, ~ s,t)$, where $(s,t)$ are canonical coordinates on $\R^2$, the Riemannian metric is of the form
$$g = \delta x^I \, \delta x^J g_{JI}(x) + 2 \, \delta s \, \delta t + \delta s \, \delta s \, H(x,s),$$
where $H : M \times \R \rightarrow \R_*$ is a smooth map of $\Z_2^n$manifolds, and $g_N = \delta x^I \, \delta x^J g_{JI}(x)$ is a degree zero Riemannian metric on $N$. It is easy to observe that there is a null Killing vector field $\frac{\partial}{\partial t}$, i.e., $\langle \partial_t | \partial_t \rangle_g = 0$ and $L_{\partial_t}g =0$. Such $\Z_2^n$-manifolds are generalisations of pp-wave type spacetimes as defined by Flores \& S\'{a}nchez \cite{Flores:2006}.
\end{example}
\subsection{The inverse metric}
The inverse metric  is locally defined via
$$g^{IK}g_{KJ} = g_{JK} g^{KI} = \delta_J^I.$$
Note that the inverse metric will have the same $\Z_2^n$-degree as the metric. The above relation can be used to deduce the symmetry property of the  inverse metric.
\begin{proposition}\label{prop:SymInv}
The inverse metric has the following symmetry
$$(-1)^{\langle \deg(K), \deg(K) \rangle } ~ g^{IK} = (-1)^{\langle\deg(I) , \deg(K) \rangle + \langle \deg(g), \deg(g)\rangle} ~ \big(  (-1)^{\langle \deg(I), \deg(I)\rangle} ~ g^{KI} \big).$$
\end{proposition}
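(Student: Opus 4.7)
The natural approach is to use the musical isomorphism $\flat$ of Lemma~\ref{lem:VecOneform} together with the graded symmetry of $\langle - | -\rangle_g$, testing the resulting identity on pairs of coordinate one-forms $\delta x^I$ and $\delta x^K$. The key observation is that although the inverse metric is a tensor, its symmetry type is rigidly determined by that of $g$ via the induced pairing on $\mathcal{T}^*M$, and one can extract the component-wise identity by plugging in specific coordinate one-forms.

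First, I would invert $\flat$ in coordinates. Writing $V := \flat^{-1}(\delta x^I)$, which is homogeneous of degree $\deg(I)+\deg(g)$, the defining condition $(\flat V)_J = \delta^I_J$ reads $V^L g_{LJ} = (-1)^{\langle \deg(V),\deg(J)\rangle}\, \delta^I_J$. Multiplying through by $g^{JM}$ and summing over $J$ collapses the Kronecker delta to $J=I$, which eliminates the $J$-dependence of the sign and yields
\[
(\flat^{-1}\delta x^I)^M = (-1)^{\langle \deg(I),\deg(I)\rangle + \langle \deg(g),\deg(I)\rangle}\, g^{IM}.
\]

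Next, I would compute $\langle \flat^{-1}\delta x^K \,|\, \flat^{-1}\delta x^I\rangle_g$ in two independent ways. On the one hand, Lemma~\ref{lem:VecOneform} identifies it with the dual pairing $(\delta x^I)(\flat^{-1}\delta x^K) = (\flat^{-1}\delta x^K)^I$, which by the coordinate formula above equals $(-1)^{\langle \deg(K),\deg(K)\rangle + \langle \deg(g),\deg(K)\rangle}\, g^{KI}$. On the other hand, the graded symmetry \eqref{MetProp2} together with $|\flat^{-1}\delta x^I|=\deg(I)+\deg(g)$ gives
\[
\langle \flat^{-1}\delta x^K \,|\, \flat^{-1}\delta x^I\rangle_g = (-1)^{\langle \deg(I)+\deg(g),\, \deg(K)+\deg(g)\rangle}\,\langle \flat^{-1}\delta x^I \,|\, \flat^{-1}\delta x^K\rangle_g,
\]
and the right-hand side equals $(-1)^{\langle \deg(I)+\deg(g),\, \deg(K)+\deg(g)\rangle}\cdot(-1)^{\langle \deg(I),\deg(I)\rangle + \langle \deg(g),\deg(I)\rangle}\, g^{IK}$ by the $I\leftrightarrow K$ analogue of the previous step.

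Finally, equating the two expressions and expanding $\langle \deg(I)+\deg(g),\deg(K)+\deg(g)\rangle$, the cross-term $\langle \deg(I),\deg(g)\rangle + \langle \deg(g),\deg(I)\rangle$ vanishes modulo $2$, and regrouping the surviving signs produces Proposition~\ref{prop:SymInv} directly. The main obstacle is the first step: a priori the sign $(-1)^{\langle \deg(V),\deg(J)\rangle}$ depends on the summed index $J$, and one must observe that $\delta^I_J$ fixes $J=I$ inside the sum so that the sign becomes the $J$-independent factor appearing in the formula for $(\flat^{-1}\delta x^I)^M$. Once this coordinate expression for $\flat^{-1}$ is in hand, the remaining sign bookkeeping is routine.
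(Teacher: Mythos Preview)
Your argument is correct. The route, however, is genuinely different from the paper's. The paper works purely with the defining relation $g^{IK}g_{KJ}=g_{JK}g^{KI}=\delta^I_J$: it $\Z_2^n$-commutes the two factors in $g^{IK}g_{KJ}$, applies the known symmetry of $g_{KJ}$, posits an undetermined sign $\phi$ for the swap $g^{IK}\leftrightarrow g^{KI}$, and then solves for $\phi$ by imposing that the total sign equal $+1$ when $I=J$. No use is made of Lemma~\ref{lem:VecOneform} or of the induced pairing on one-forms. Your approach instead transports the problem to $\mathcal{T}^*M$ via $\flat^{-1}$ and reads the identity off from the graded symmetry of $\langle - | -\rangle_g$ evaluated on $\flat^{-1}\delta x^I$ and $\flat^{-1}\delta x^K$. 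This is more conceptual: it makes transparent \emph{why} the anomalous factor $(-1)^{\langle\deg(g),\deg(g)\rangle}$ appears---it is exactly the self-pairing of the degree shift $\deg(g)$ built into $\flat^{-1}$---whereas in the paper's computation that factor emerges only after expanding a long scalar product. The paper's proof, on the other hand, is shorter and needs nothing beyond the bare algebraic definition of the inverse, so it could in principle precede Lemma~\ref{lem:VecOneform} rather than depend on it.
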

\begin{proof}
\begin{align*}
g^{IK}g_{KJ} &= (-1)^{\langle \deg(I) + \deg(K) + \deg(g), \deg(K) + \deg(J) + \deg(g)   \rangle}  ~ g_{KJ}g^{IK}\\
& = (-1)^{\langle \deg(I) + \deg(K) + \deg(g), \deg(K) + \deg(J) + \deg(g)   \rangle + \langle \deg(K), \deg(J) \rangle + \phi }  ~ g_{JK}g^{KI}
\end{align*}
where $\phi$ represents the symmetry of the inverse metric, which is to be determined. Expanding the scalar product shows that it must be the case that
$\phi + \langle  \deg(I), \deg(J)\rangle + \langle \deg(I), \deg(g)  \rangle + \langle \deg(K), \deg(K)\rangle + \langle \deg(K), \deg(J)\rangle  + \langle \deg(g), \deg(J)  \rangle + \langle \deg(g) , \deg(g) \rangle =0$
when $I = J$. This means that  $\phi = \langle  \deg(J), \deg(J)  \rangle + \langle  \deg(K), \deg(K)   \rangle + \langle  \deg(K), \deg(J)  \rangle + \langle  \deg(g), \deg(g)  \rangle$, i.e.,
$$g^{JK} = (-1)^{  \langle  \deg(J), \deg(J)  \rangle + \langle  \deg(K), \deg(K)   \rangle + \langle  \deg(K), \deg(J)  \rangle + \langle  \deg(g), \deg(g)  \rangle} \, g^{KL}\,.$$
This directly implies the symmetry property stated in the proposition.
\end{proof}
\noindent \textbf{Observation:} If the Riemannian metric is even, then its corresponding inverse metric defines a symmetric contravariant rank-2 tensor. On the other hand, is the Riemannian metric is odd, then its corresponding inverse metric defines a skew-symmetric contravariant  rank-2 tensor.  This is in complete agreement with the case of even and odd Riemannian metrics on supermanifolds.
\smallskip

The inverse metric can be used to raise indices of covariant tensors and in particular, it is essential in defining the trace of rank-two covariant tensors. Specifically, starting with a rank-two covariant tensor $\omega_{JI}$, where we do not assume any symmetry, we can use  Lemma \ref{lem:VecOneform} and the inverse metric to define
$$X_{J}^{\:\:\: K} := (-1)^{\langle \deg(I), \deg(\omega) + \deg(g) + \deg(J) \rangle}\, \omega_{JI}g^{IK}\,.$$
The trace of $\omega_{JI}$ is then defined as the $\Z_2^n$-graded trace (see \cite[Theorem 1.]{Covolo:2012}) of $X_{J}^{\:\:\: K} $.
\begin{definition}\label{def:trace}
Let $(M,g)$ be a Riemannian $\Z_2^n$-manifold and let $\omega_{JI}$ be (the components of) a rank-two covariant tensor.  Then the \emph{trace of $\omega$} is defined as
$$\tr_g\big( \omega\big) :=  (-1)^{\langle \deg(\omega) + \deg(g), \deg(I) + \deg(J) \rangle} ~  (-1)^{\langle \deg(I), \deg(J) \rangle}\, \omega_{JI} \, g^{IJ}\, (-1)^{\langle \deg(J), \deg(J)\rangle } \in C^\infty(M)\,.$$
\end{definition}
Note that $\deg\left(\tr_g\big( \omega\big)\right) = \deg(\omega) + \deg(g)$. The formula for the trace of a rank-two tensor is identical to the classical formula up to the obligatory sign factors. We have chosen the second index in our definition, but this is sufficient as any covariant tensor can be written as the sum of its symmetrisation and skew-symmetrisation.%
\begin{lemma}\label{lem:trace}
Let $(M,g)$ be a Riemannian manifold. If $g$ is an odd Riemannian metric and $\omega_{JI}$ is symmetric, i.e., $\omega_{JI} = (-1)^{\langle \deg(I), \deg(J) \rangle}\, \omega_{IJ}$, then $\tr_g(\omega)=0$. Similarly, if $g$ is an even Riemannian metric and $\omega_{JI}$ is skew-symmetric, i.e., $\omega_{JI} = {-}(-1)^{\langle \deg(I), \deg(J) \rangle}\, \omega_{IJ}$, then $\tr_g(\omega)=0$.
\end{lemma}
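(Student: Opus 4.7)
The natural strategy is the classical one for showing that the trace of a symmetric object paired with a skew object (or vice versa) vanishes: rename the dummy indices $I \leftrightarrow J$ in Definition \ref{def:trace} and use the assumed symmetry of $\omega$ together with the symmetry of $g^{IJ}$ from Proposition \ref{prop:SymInv} to show that $\tr_g(\omega)$ equals $-\tr_g(\omega)$ in each of the two cases.

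Concretely, I would first write
$$\tr_g(\omega) = (-1)^{\langle \deg(\omega) + \deg(g),\, \deg(I) + \deg(J)\rangle + \langle \deg(I), \deg(J)\rangle + \langle \deg(J), \deg(J)\rangle} \,\omega_{JI}\, g^{IJ},$$
and then perform the relabelling $I \leftrightarrow J$ on the right-hand side. This produces an identical expression with $I$ and $J$ interchanged in every sign factor, and with the new factors $\omega_{IJ}$ and $g^{JI}$ in place of $\omega_{JI}$ and $g^{IJ}$.

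Next I would substitute in the two symmetry rules
$$\omega_{IJ} = \epsilon_\omega (-1)^{\langle \deg(I), \deg(J)\rangle}\,\omega_{JI}, \qquad \epsilon_\omega \in \{+1,-1\},$$
(with $\epsilon_\omega = +1$ in the symmetric case and $\epsilon_\omega = -1$ in the skew case), and the rearranged version of Proposition \ref{prop:SymInv},
$$g^{JI} = (-1)^{\langle \deg(J), \deg(J)\rangle + \langle \deg(I), \deg(I)\rangle + \langle \deg(I), \deg(J)\rangle + \langle \deg(g), \deg(g)\rangle}\, g^{IJ}.$$
The resulting expression is again $\omega_{JI} g^{IJ}$ multiplied by the original sign factor of Definition \ref{def:trace} times a correction sign. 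Collecting signs mod $2$ using the symmetry $\langle -, -\rangle$ (so that $\langle \deg(I), \deg(J)\rangle + \langle \deg(J), \deg(I)\rangle \equiv 0$ and $\langle \deg(I), \deg(I)\rangle$ appears doubled), essentially all of the mixed terms cancel and the only surviving correction is $\epsilon_\omega(-1)^{\langle \deg(g), \deg(g)\rangle}$. This yields
$$\tr_g(\omega) = \epsilon_\omega (-1)^{\langle \deg(g), \deg(g)\rangle}\, \tr_g(\omega).$$

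The conclusion is then immediate once one observes that $\langle \deg(g), \deg(g)\rangle \equiv \sum_i \deg(g)_i \pmod{2}$, i.e.\ $(-1)^{\langle \deg(g), \deg(g)\rangle}$ is $+1$ when $g$ is even and $-1$ when $g$ is odd. Thus the prefactor is $-1$ precisely when $(\epsilon_\omega, \deg(g))$ is either (symmetric, odd) or (skew-symmetric, even), forcing $\tr_g(\omega) = -\tr_g(\omega)$ and hence $\tr_g(\omega) = 0$ in both cases covered by the lemma. The only real obstacle is book-keeping: careful tracking of the four sign contributions (the two from the trace sign prefactor, the one from $\omega$, the one from $g^{IJ}$) and using $2\langle -, -\rangle \equiv 0 \pmod 2$ to obtain the clean cancellation displayed above.
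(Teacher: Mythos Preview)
Your proposal is correct and follows essentially the same approach as the paper's proof: both use Proposition~\ref{prop:SymInv} to identify the symmetry of $g^{IJ}$, then observe that in each case the contraction pairs a symmetric object with a skew-symmetric one so that $\tr_g(\omega) = -\tr_g(\omega)$. The paper states this in one line without writing out the sign book-keeping, whereas you carry out the index swap and sign cancellation explicitly, arriving at the same conclusion $\tr_g(\omega) = \epsilon_\omega(-1)^{\langle\deg(g),\deg(g)\rangle}\tr_g(\omega)$.
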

\begin{proof}
Using Proposition \ref{prop:SymInv} we observe that in both the stipulated cases we contract tensors that are symmetric and skew-symmetric in their indices. Thus, $\tr_g(\omega) = {-} \tr_g(\omega) $ and so $\tr_g(\omega) =0$.
\end{proof}
\begin{example}
If $g$ is an odd Riemannian metric, then it is clear from Proposition \ref{prop:SymInv} and  Lemma \ref{lem:trace} that $\tr_g(g)=0$.
\end{example}

\subsection{Levi-Civita connections and the Fundamental Theorem of Riemannian $\Z_2^n$-geometry}
The notion of an affine connection on a $\Z_2^n$-manifold is more-or-less the same as that of an affine connection on a supermanifold, or indeed a manifold. In particular, we generalise Koszul's notion of a connection.
\begin{definition}\label{def:AffCon}
An \emph{affine connection} on a $\Z_2^n$-manifold is a $\Z_2^n$-degree preserving map
\begin{align*}
\nabla : &  ~ \Vect(M) \times \Vect(M) \longrightarrow \Vect(M)\\
 & (X,Y) \mapsto \nabla_X Y,
\end{align*}
that satisfies the following:
\begin{itemize}
\item Bi-linearity
\begin{align*}
& \nabla_X(Y+Z) = \nabla_X Y + \nabla_X Z,\\
&\nabla_{X +Y} Z = \nabla_X Z + \nabla_Y Z,
\end{align*}
\item $C^\infty(M)$-linearity in the first argument
$$\nabla_{fX} Y = f \, \nabla_X Y,$$
\item The Leibniz rule
$$\nabla_X fY = X(f) Y + (-1)^{\langle\deg(X), \deg(f) \rangle} ~ f \, \nabla_X Y,$$
\end{itemize}
for all  (homogeneous) $X,Y,Z \in \Vect(M)$ and $f \in C^\infty(M)$.
\end{definition}
We define the \emph{Christoffel symbols}  of any affine connection by (locally) setting
\begin{align*}
\nabla_X Y &= X^I \frac{\partial Y^J}{\partial x^I}\frac{\partial}{\partial x^J}  + (-1)^{\langle \deg(I), \deg(Y) + \deg(J) \rangle} ~ X^I Y^J \nabla_{\frac{\partial}{\partial x^I}} \left(\frac{\partial}{\partial x^J} \right)\\
&=  X^I \frac{\partial Y^J}{\partial x^I}\frac{\partial}{\partial x^J}  + (-1)^{\langle \deg(I), \deg(Y) + \deg(J) \rangle} ~ X^I Y^J \Gamma_{JI}^{\:\:\:K} \frac{\partial}{\partial x^K}.
\end{align*}
That is we make the standard definition
$$ \nabla_{\frac{\partial}{\partial x^I}} \left(\frac{\partial}{\partial x^J} \right) :=  \Gamma_{JI}^{\:\:\:K} \frac{\partial}{\partial x^K}.$$
\begin{definition}\label{def:SymCon}
The \emph{torsion tensor} of an affine connection  $T_\nabla : \Vect(M)\otimes_{C^\infty(M)} \Vect(M) \rightarrow \Vect(M)$   is defined as
$$T_\nabla(X,Y) := \nabla_X Y - (-1)^{\langle\deg(X), \deg(Y) \rangle} ~ \nabla_Y X - [X,Y],$$
 for any (homogeneous) $X, Y \in \Vect(M)$. An affine connection is said to be \emph{symmetric} if the torsion vanishes.
\end{definition}
It is clear that the torsion is skew-symmetric, i.e.,
$$ T_\nabla(X,Y) = -  (-1)^{\langle\deg(X), \deg(Y) \rangle} ~ T_\nabla(Y,X).$$
Furthermore, a direct calculation shows that
$$T_\nabla(fX,Y) = f \,T_\nabla(X,Y) = (-1)^{\langle \deg(f), \deg(X) \rangle}~ T_\nabla(X,fY),$$
i.e., we do indeed have a tensor.  The components of the torsion tensor with respect local coordinates are
$$T_{IJ}^{\:\:\:K} = \Gamma_{JI}^{\:\:\:K} -(-1)^{\langle \deg(I), \deg(J) \rangle}  \Gamma_{IJ}^{\:\:\:K}.$$
\begin{definition}\label{def:MetComp}
An affine connection on a Riemannian $\Z_2^n$-manifold $(M,g)$ is said to be \emph{metric compatible} if and only if
$$X\langle Y|Z \rangle_g  = \langle \nabla_X Y|Z \rangle_g  + (-1)^{\langle \deg(X), \deg(Y) \rangle} ~ \langle Y|\nabla_X Z \rangle_g\,,$$
for any $X, Y, Z \in \Vect(M)$.
\end{definition}
\begin{lemma}\label{lem:KosFor}
Let $(M,g)$ be a Riemannian $\Z_2^n$-manifold. Then an affine connection $\nabla$ that is symmetric (i.e., is torsionless) and metric compatible satisfies the Koszul formula
\begin{align}\label{eqn:KosFor}
2 \langle \nabla_X Y | Z  \rangle_g & = X\langle Y | Z   \rangle_g  + \langle [X,Y]| Z \rangle_g \\ \nonumber
 & +(-1)^{\langle \deg(X), \deg(Y) + \deg(Z) \rangle}~ \big( Y \langle Z|X\rangle_g - \langle [Y,Z]| X \rangle _g \big)\\ \nonumber
 &-(-1)^{\langle \deg(Z), \deg(X) + \deg(Y) \rangle}~ \big( Z \langle X|Y\rangle_g - \langle [Z,X]| Y \rangle _g \big),
\end{align}
for all homogeneous $X,Y$ and $Z \in \Vect(M)$.
\end{lemma}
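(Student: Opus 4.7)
The plan is to adapt the classical polarization argument that derives the Koszul formula from torsion-freeness and metric compatibility, while carefully propagating the $\Z_2^n$-sign factors. I would write the metric-compatibility identity of Definition \ref{def:MetComp} three times, once for each cyclic arrangement of the arguments $X,Y,Z$, form a particular sign-weighted linear combination, and then invoke the torsion-free condition (Definition \ref{def:SymCon}) together with the graded symmetry of $g$ (property (2)) to collapse the six resulting $\nabla$-terms into a single $2\langle \nabla_X Y|Z\rangle_g$ plus Lie-bracket contributions.

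Concretely, the three starting identities are
\begin{align*}
X\langle Y|Z\rangle_g &= \langle \nabla_X Y|Z\rangle_g + (-1)^{\langle\deg(X),\deg(Y)\rangle}\langle Y|\nabla_X Z\rangle_g, \\
Y\langle Z|X\rangle_g &= \langle \nabla_Y Z|X\rangle_g + (-1)^{\langle\deg(Y),\deg(Z)\rangle}\langle Z|\nabla_Y X\rangle_g, \\
Z\langle X|Y\rangle_g &= \langle \nabla_Z X|Y\rangle_g + (-1)^{\langle\deg(Z),\deg(X)\rangle}\langle X|\nabla_Z Y\rangle_g.
\end{align*}
I would multiply the second by $(-1)^{\langle\deg(X),\deg(Y)+\deg(Z)\rangle}$ and the third by $-(-1)^{\langle\deg(Z),\deg(X)+\deg(Y)\rangle}$, and add all three. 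These prefactors are precisely those appearing in \eqref{eqn:KosFor}, and they are designed so that, after using the graded symmetry of $g$ to normalise every inner product so that $Z$, $X$, respectively $Y$ sits in its second slot, the six $\nabla$-terms pair up naturally into three combinations of the form $\nabla_X Y + \zve_1\nabla_Y X$, $\nabla_X Z + \zve_2\nabla_Z X$, $\nabla_Y Z + \zve_3\nabla_Z Y$, with signs $\zve_i\in\{\pm 1\}$ dictated by the graded symmetry rule.

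Applying the torsion-free condition $\nabla_A B = (-1)^{\langle\deg(A),\deg(B)\rangle}\nabla_B A + [A,B]$ (which yields both $\nabla_A B - (-1)^{\langle\deg(A),\deg(B)\rangle}\nabla_B A = [A,B]$ and $\nabla_A B + (-1)^{\langle\deg(A),\deg(B)\rangle}\nabla_B A = 2\nabla_A B - [A,B]$), two of these three combinations collapse to inner products with $[Y,Z]$ and $[Z,X]$, while the third produces $2\nabla_X Y - [X,Y]$. Isolating $2\langle \nabla_X Y|Z\rangle_g$ and moving the bracket contributions to the right-hand side then delivers \eqref{eqn:KosFor}.

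The principal obstacle is sign bookkeeping: each use of the graded symmetry of $g$ and each commutation of a homogeneous scalar past a vector field introduces a factor $(-1)^{\langle\cdot,\cdot\rangle}$, and one must verify, using the bilinearity of the $\Z_2^n$-scalar product together with the $\Z_2$-additivity of the exponents, that the signs generated by the three pairings cancel against the weighted prefactors to reproduce exactly the signs displayed in \eqref{eqn:KosFor}. This is the only substantive content beyond the classical derivation; once it is carried out, the factor of $2$ on the left emerges from the unique ``$2\nabla_X Y$'' pair, and the three bracket terms land in their predicted places with the stated prefactors.
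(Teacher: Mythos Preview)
Your proposal is correct and follows essentially the same approach as the paper: write the three metric-compatibility identities for the cyclic arrangements of $X,Y,Z$, form the sign-weighted combination with precisely the prefactors you indicate, and then use torsion-freeness to collapse the $\nabla$-terms into $2\langle\nabla_X Y|Z\rangle_g$ plus the bracket contributions. The paper's proof is slightly terser in that it does not spell out the pairing mechanism as explicitly as you do, but the underlying computation is identical.
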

\begin{proof}
The  Koszul formula \eqref{eqn:KosFor} can be constructed in the same way as the classical case upon being careful with the needed sign factors.  From the metric compatibility (Definition \ref{def:MetComp}) we have
\begin{align*}
& X\langle Y|Z \rangle_g  = \langle \nabla_X Y|Z \rangle_g  + (-1)^{\langle \deg(X), \deg(Y) \rangle} ~ \langle Y|\nabla_X Z \rangle_g\,,\\
& Y\langle Z|X \rangle_g  = \langle \nabla_Y Z|X \rangle_g  + (-1)^{\langle \deg(Y), \deg(Z) \rangle} ~ \langle Z|\nabla_Y X \rangle_g\,,\\
& Z\langle X|Y \rangle_g  = \langle \nabla_Z X|Y \rangle_g  + (-1)^{\langle \deg(Z), \deg(X) \rangle} ~ \langle X|\nabla_Z Y \rangle_g\,.\\
\end{align*}
We then sum these three with appropriate sign factors  to get (after a little rearranging)
\begin{align*}
& X\langle Y|Z \rangle_g + (-1)^{\langle \deg(X), \deg(Y) + \deg(Z) \rangle} ~ Y\langle Z|X \rangle_g  {-} (-1)^{\langle \deg(Z), \deg(X) + \deg(Y) \rangle} Z\langle X|Y \rangle_g\\
& = \langle \nabla_X Y|Z \rangle_g + (-1)^{\langle \deg(X), \deg(Y) \rangle} ~ \langle \nabla_Y X |Z \rangle_g\\
& + (-1)^{\langle \deg(X), \deg(Y) \rangle}~ \langle Y | \nabla_X Z - (-1)^{\langle \deg(X), \deg(Z) \rangle} ~ \nabla_Z X  \rangle_g\\
&+  \langle X | \nabla_Y Z - (-1)^{\langle \deg(Z), \deg(Y) \rangle} ~ \nabla_Z Y  \rangle_g.
\end{align*}
Next we use the symmetric property (Definition \ref{def:SymCon}) the RHS of the above becomes
\begin{align*}
&= 2 \langle \nabla_X Y |Z \rangle_g - \langle[X,Y] | Z \rangle_g + (-1)^{\langle \deg(X), \deg(Y) \rangle}~ \langle Y | [X,Z] \rangle_g +  \langle X | [Y,Z] \rangle_g\\
&=2 \langle \nabla_X Y |Z \rangle_g - \langle[X,Y] | Z \rangle_g - (-1)^{\langle \deg(Z), \deg(X) + \deg(Y) \rangle} ~ \langle [Z,X]|Y \rangle_g + (-1)^{\langle \deg(X), \deg(Y) + \deg(Z) \rangle} ~ \langle [Y,Z]|X \rangle_g.
\end{align*}
Rearranging this produces the Koszul formula.
\end{proof}
We are now in  a position to state and prove  the generalisation of the fundamental theorem of Riemannian geometry to $\Z_2^n$-geometry.  That is, we have a canonical affine connection associated with the Riemannian structure.  
\begin{theorem}[The Fundamental Theorem]\label{trm:FundTheo}
There is a unique symmetric (torsionless) and metric compatible affine connection on a Riemannian $\Z_2^n$-manifold $(M,g)$.
\end{theorem}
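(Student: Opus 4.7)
The plan is to follow the classical strategy, extracting both existence and uniqueness from the Koszul formula of Lemma \ref{lem:KosFor}. Uniqueness is essentially free: if $\nabla$ is torsionless and metric compatible, then Lemma \ref{lem:KosFor} tells us that $\langle \nabla_X Y \,|\, Z \rangle_g$ equals, for every $Z \in \Vect(M)$, the right-hand side of \eqref{eqn:KosFor}, which is built only from $g$ and the $\Z_2^n$-graded Lie bracket and therefore does not involve $\nabla$. The non-degeneracy of $g$, in the form of Lemma \ref{lem:VecOneform}, then forces $\nabla_X Y$ to be uniquely determined.

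For existence, I would define $\nabla_X Y$ implicitly through the Koszul formula. For homogeneous $X, Y \in \Vect(M)$, consider the map $\zw_{X,Y} : Z \mapsto \tfrac{1}{2} \big( \text{RHS of \eqref{eqn:KosFor}} \big)$ from $\Vect(M)$ to $C^\infty(M)$. The first step is to verify that $\zw_{X,Y}$ is $C^\infty(M)$-linear in $Z$ with the correct sign convention, so that it defines a one-form of $\Z_2^n$-degree $\deg(X)+\deg(Y)+\deg(g)$. Lemma \ref{lem:VecOneform} then yields a unique vector field, which we declare to be $\nabla_X Y$.

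The remaining steps are routine sign-tracking analogues of the classical verifications. Bilinearity in $X$ and $Y$ is immediate from the $\R$-bilinearity of the RHS of \eqref{eqn:KosFor}. The $C^\infty(M)$-linearity in $X$ follows by substituting $fX$ and observing that all derivative-of-$f$ terms cancel in pairs once the $\langle [fX,Y]|Z\rangle_g$ and $\langle[Z,fX]|Y\rangle_g$ brackets are expanded. The Leibniz rule in $Y$ is obtained similarly by substituting $fY$, expanding $[X,fY]$ and $[fY,Z]$ via the graded derivation property, and collecting the $X(f)$ contributions. For torsionlessness, one computes the combination $\langle \nabla_X Y - (-1)^{\langle \deg(X),\deg(Y)\rangle}\nabla_Y X \,|\, Z\rangle_g$ directly from \eqref{eqn:KosFor}: the tensor terms $X\langle Y|Z\rangle_g$ etc.\ cancel in graded-symmetric pairs and what remains is $\langle[X,Y]|Z\rangle_g$. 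Metric compatibility is verified analogously by summing $\langle \nabla_X Y|Z\rangle_g$ with $(-1)^{\langle \deg(X),\deg(Y)\rangle}\langle Y|\nabla_X Z\rangle_g$; the bracket terms cancel and the tensor terms collapse to $X\langle Y|Z\rangle_g$.

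The main obstacle is the initial $C^\infty(M)$-linearity of $\zw_{X,Y}$ in $Z$: each of the six terms on the RHS of \eqref{eqn:KosFor} transforms inhomogeneously when $Z$ is replaced by $fZ$, and the $X(f)$, $Y(f)$ and $Z(f)$ contributions must cancel against the bracket terms $\langle[Y,Z]|X\rangle_g$ and $\langle[Z,X]|Y\rangle_g$. Getting all $\Z_2^n$-sign prefactors right at this step is where genuine care is required; once it is in place, the remaining identities are direct generalisations of the classical computations, with the $\Z_2^n$-graded symmetry of $g$ and the $\Z_2^n$-graded Jacobi-like identities for vector fields carrying the signs through automatically.
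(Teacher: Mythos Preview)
Your proposal is correct and follows essentially the same route as the paper: both extract uniqueness from the Koszul formula \eqref{eqn:KosFor} via non-degeneracy, then construct $\nabla_X Y$ by showing the right-hand side is $C^\infty(M)$-linear in $Z$ and invoking Lemma \ref{lem:VecOneform}, and finally verify the connection axioms term by term. Your treatment of the last step---checking torsionlessness and metric compatibility directly from \eqref{eqn:KosFor}---is in fact slightly more explicit than the paper's, which simply appeals back to Lemma \ref{lem:KosFor}.
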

\begin{proof}
The classical proof using the Koszul formula generalises to our setting. Lemma \ref{lem:KosFor} shows that if an affine connection is metric compatible and symmetric then it necessarily satisfies the Koszul formula \eqref{eqn:KosFor}.  We need to argue that the Koszul formula defines a unique  metric compatible and symmetric connection. With this in mind, let us fix $X$  and $Y \in \Vect(M)$ (both homogeneous) and then consider the RHS of \eqref{eqn:KosFor} as a map $\omega(Z) : \Vect(M) \rightarrow C^\infty(M)$. A direct calculation shown that $\omega(f \,Z) = (-1)^{\langle \deg(f) , \deg(X)+ \deg(Y) \rangle}~ f \, \omega(Z)$. Thus, $\omega$ is a tensor, and more specifically a one-form. We can then use Lemma \ref{lem:VecOneform}. We  define the corresponding vector field as $2 \nabla_X Y$ so that $\omega(Z) = \langle 2 \nabla_X Y, Z\rangle_g$ for all vector fields $Z$. In this way we obtain a map $\nabla : \Vect(M) \times \Vect(M) \rightarrow \Vect(M)$. We need to show that this map does indeed define an affine connection (see Definition \ref{def:AffCon}).
\begin{itemize}
\setlength\itemsep{1em}
\item Bi-linearity is clear as the RHS of the Koszul formula \eqref{eqn:KosFor} is bi-linear.
\item The $C^\infty(M)$-linearity in the first argument follows from a direct calculation:
\begin{align*}
\langle 2 \nabla_{fX}Y | Z \rangle_g & = f X\langle Y|Z \rangle_g + \langle [fX,Y]|Z \rangle_g\\
& + (-1)^{\langle \deg(X) + \deg(f), \deg(Y) + \deg(Z) \rangle} \big(Y \langle Z|fX\rangle_g - \langle [Y,Z]| fX \rangle_g \big)\\
  & - (-1)^{\langle\deg(Z), \deg(X) + \deg(f) + \deg(Z) \rangle} \big( Z \langle fX| Y\rangle_g   -   \langle [Z,fX]| Y\rangle_g\big)\\
  & = f \left ( X\langle Y|Z \rangle_g + \langle [X,Y]|Z \rangle_g \right.\\
  & + (-1)^{\langle \deg(X), \deg(Y) + \deg(Z) \rangle } \big(Y \langle Z|X\rangle_g - \langle [Y,Z]| X \rangle_g \big)\\
  & - \left.  (-1)^{\langle \deg(Z), \deg(X) + \deg(Y) \rangle } \big(Z \langle X|Y\rangle_g - \langle [Z,X]| Y \rangle_g \big) \right)\\
  & - (-1)^{\langle \deg(X), \deg(f) , \deg{Y}\rangle}\, Y(f)\langle X|Z\rangle_g + (-1)^{\langle \deg(X), \deg(f) , \deg{Y}\rangle + \langle \deg{X}, \deg(Z) \rangle}\, Y(f)\langle Z|X\rangle_g\\
  & - (-1)^{\langle \deg(Z), \deg(X)+ \deg(f) + \deg(Y) \rangle}\, Z(f) \langle X|Y\rangle_g + (-1)^{\langle \deg(Z), \deg(X)+ \deg(f) + \deg(Y) \rangle}\, Z(f) \langle X|Y\rangle_g \\
   & = \langle 2 \, f\, \nabla_X Y|Z \rangle_g.
\end{align*}
This is precisely the $C^\infty(M)$-linearity property.
\item The Leibniz rule follows from a direct calculation:
\begin{align*}
\langle 2 \nabla_X fY\rangle_g & = X\langle fY| Z \rangle_g + \langle [X, fY]|Z  \rangle_g\\
& + (-1)^{\langle \deg(X), \deg(Y) + \deg(f) + \deg(Z) \rangle}\, \big( f Y \langle Z |X\rangle_g - \langle [fY ,Z]| X \rangle_g \big)\\
& + (-1)^{\langle \deg(Z), \deg(Y) + \deg(f) + \deg(X) \rangle}\, \big( Z \langle X |fY\rangle_g - \langle [Z,X]| fY \rangle_g \big)\\
& = X(f) \langle Y |Z\rangle_g + (-1)^{\langle \deg(X), \deg(f) \rangle}\, f X\langle Y |Z\rangle_g\\
& + X(f) \langle Y |Z\rangle_g + (-1)^{\langle \deg(X), \deg(f) \rangle}\,  f \langle [X,Y]| Z\rangle_g\\
&+ (-1)^{\langle \deg(X), \deg(f) \rangle + \langle \deg(X), \deg(Y) + \deg(Z) \rangle}\, f\big(Y \langle Z|X \rangle_g   - \langle[Y,Z]|X  \rangle_g\big)\\
&+ (-1)^{\langle \deg(X), \deg(f) \rangle + \langle \deg(X), \deg(Y) \rangle + \langle \deg(Z), \deg(X)+ \deg(Y) + \deg(f) \rangle}\, Z(f) \langle Y|X \rangle_g\\
&-(-1)^{\langle \deg(X), \deg(f) \rangle + \langle \deg(Z) ,\deg(X) + \deg(Y)\rangle}\, f \big( Z\langle X|Y\rangle_g  - \langle [Z,X]|Y\rangle_g\big)\\
&-(-1)^{\langle \deg(X), \deg(f)  \rangle + \langle \deg(X) + \deg(Y) + \deg(f) \rangle}\, Z(f) \langle X|Y\rangle_g \\
& = (-1)^{\langle(X), \langle(f)}\left(  \big(fX \langle Y|Z\rangle_g  + \langle f[X,Y]|Z \rangle_g \big) \right.\\
&  + (-1)^{\langle \deg(X), \deg(Y) + \deg(Z)\rangle}\,\big(fY \langle Z|X\rangle_g  + \langle f[Y,Z]|X \rangle_g \big) \\
& \left.  - (-1)^{\langle \deg(Z), \deg(X) + \deg(Y)\rangle}\,\big(fZ \langle X|Y\rangle_g  + \langle f[Z,X]|Y \rangle_g \big) \right) + \langle 2 X(f)Y|Z \rangle_g\\
& = \langle 2 X(f)Y + (-1)^{\langle \deg(X), \deg(f)\rangle} \, f \,2 \nabla_X Y | Z  \rangle_g.
\end{align*}
This is precisely the Leibniz rule.
\end{itemize}
Thus, the Koszul formula defines a connection $\nabla$. We can then use the  Lemma \ref{lem:KosFor} to conclude that this connection is indeed metric compatible and symmetric.
\end{proof}
\begin{definition}
The symmetric, metric compatible affine connection defined by   Theorem \ref{trm:FundTheo} is referred to as the \emph{Levi-Civita connection}.
\end{definition}
\noindent \textbf{Observation:} The Fundamental Theorem holds on any Riemannian $\Z_2^n$-manifold irrespectively of the $\Z_2^n$-degree of the Riemannian metric. That is, any Riemannian $\Z_2^n$-manifold canonically comes equipped with the Levi-Civita connection. This encompasses the well-known result for Riemannian supermanifolds. 

\begin{proposition}\label{prop:ChriSymLoc}
In  local coordinates $x^I$  the Christoffel symbols $\Gamma_{JI}^{\:\:\: L}$ of the Levi-Civita connection on a Riemannian $\Z_2^n$-manifold $(M,g)$ are given by
$$\Gamma_{JI}^{\:\:\: L} = \frac{1}{2}\left(\frac{\partial g_{JK}}{\partial x^I} + (-1)^{\langle \deg(I), \deg(J)\rangle} \, \frac{\partial g_{IK}}{\partial x^J} - (-1)^{\langle \deg(K), \deg(I) + \deg(J) \rangle} \, \frac{\partial g_{IJ}}{\partial x^K} \right)g^{KL}\,.$$
\end{proposition}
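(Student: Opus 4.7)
The natural approach is to specialise the Koszul formula \eqref{eqn:KosFor} of Lemma \ref{lem:KosFor} to the coordinate basis. I set $X = \partial/\partial x^I$, $Y = \partial/\partial x^J$, $Z = \partial/\partial x^K$ in \eqref{eqn:KosFor}. Since coordinate vector fields commute, all three commutator brackets $[X,Y]$, $[Y,Z]$, $[Z,X]$ vanish, killing half of the terms. Using $\langle \partial_I | \partial_J \rangle_g = g_{IJ}$, the three surviving terms on the right-hand side become the partial derivatives $\partial_I g_{JK}$, $\partial_J g_{KI}$, $\partial_K g_{IJ}$, each carrying the appropriate $\Z_2^n$-graded sign prescribed by \eqref{eqn:KosFor}.

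On the left-hand side, the $C^\infty(M)$-linearity of $\langle - | - \rangle_g$ in its first argument permits me to pull the (function-valued) Christoffel symbol out of the bracket, yielding $2 \Gamma_{JI}^{\:\:\: M} g_{MK}$. Contracting both sides with $g^{KL}$ from the right and applying the defining relation $g_{MK} g^{KL} = \delta_M^L$ then isolates $\Gamma_{JI}^{\:\:\: L}$ on the left.

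A final, purely cosmetic step is to reconcile the sign $(-1)^{\langle \deg(I), \deg(J) + \deg(K) \rangle}$ arising on the middle term from \eqref{eqn:KosFor} with the sign $(-1)^{\langle \deg(I), \deg(J) \rangle}$ appearing in the statement. Rewriting $g_{KI} = (-1)^{\langle \deg(K), \deg(I) \rangle} g_{IK}$ via the $\Z_2^n$-graded symmetry of $g$, and invoking the symmetry of the scalar product on $\Z_2^n$ (so that $\langle \deg(I), \deg(K) \rangle + \langle \deg(K), \deg(I) \rangle \equiv 0 \pmod{2}$), collapses the extra exponent and delivers the claimed formula. The only real obstacle throughout is the careful bookkeeping of the graded signs; no new idea is required beyond the Koszul formula already established, and the fact that the resulting expression really does define a connection is guaranteed a priori by the uniqueness half of Theorem \ref{trm:FundTheo}.
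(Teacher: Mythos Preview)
Your proposal is correct and follows essentially the same route as the paper: specialise the Koszul formula \eqref{eqn:KosFor} to coordinate vector fields, obtain $2\,\Gamma_{JI}^{\:\:\: M} g_{MK}$ on the left and the three partial derivatives on the right, then contract with $g^{KL}$. The paper's proof is terser and does not spell out the sign reconciliation on the middle term, but your extra bookkeeping there is correct and harmless; your closing remark about Theorem \ref{trm:FundTheo} guaranteeing that the formula defines a connection is slightly misplaced (the connection already exists by that theorem, and you are merely computing its components), but this does not affect the argument.
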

\begin{proof}
It follows from direct computation using the Koszul formula, see Lemma \ref{lem:KosFor} and Theorem \ref{trm:FundTheo}. In particular, it is easy to see that
$$2 \langle \nabla_I \partial_J | \partial_K\rangle_g = 2\,  \Gamma_{JI}^{\:\:\: L}\, g_{LK} = \left(\frac{\partial g_{JK}}{\partial x^I} + (-1)^{\langle \deg(I), \deg(J)\rangle} \, \frac{\partial g_{IK}}{\partial x^J} - (-1)^{\langle \deg(K), \deg(I) + \deg(J) \rangle} \, \frac{\partial g_{IJ}}{\partial x^K} \right).$$
Multiplying on the right of the above by the inverse metric yields the result.
\end{proof}

\subsection{The  Riemann curvature tensor}
We now turn attention to the generalisation of the Riemann curvature tensor and its basic properties.
\begin{definition}\label{def:RiemCurTen}
The \emph{Riemannian curvature tensor} of an affine connection
$$R_\nabla : \Vect(M)\otimes_{C^\infty(M)}\Vect(M)  \otimes_{C^\infty(M)} \Vect(M) \rightarrow \Vect(M)$$
is defined as
$$R_\nabla(X,Y)Z = [\nabla_X, \nabla_Y]Z - \nabla_{[X,Y]}Z,$$
for all $X,Y$ and $Z \in \Vect(M)$.
\end{definition}
Directly from the definition it is clear that
\begin{equation}\label{eqn:AntiSymRieTens}
R_\nabla(X,Y)Z = {-}(-1)^{\langle \deg(X), \deg(Y)\rangle} \, R_\nabla(Y,X)Z,
\end{equation}
for all $X, Y$ and $Z\in \Vect(M)$.
\begin{proposition}\label{prop:FirstBianc}
The Riemannian curvature tensor associated with the Levi-Civita connection satisfies the $\Z_2^n$-graded first Bianchi identity
$$(-1)^{\langle\deg(X), \deg(Z)  \rangle} \, R_\nabla (X,Y) Z  + (-1)^{\langle\deg(Y), \deg(X)  \rangle} \, R_\nabla (Y,Z) X  + (-1)^{\langle\deg(Z), \deg(Y)  \rangle} \, R_\nabla (Z,X) Y  =0\,,$$
for any and all $X,Y$ and $Z \in \Vect(M)$.
\end{proposition}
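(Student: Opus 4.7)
The plan is to imitate the classical cyclic-expansion proof, using (a) the torsion-free property of the Levi-Civita connection from Theorem~\ref{trm:FundTheo} and (b) the $\Z_2^n$-graded Jacobi identity for the Lie bracket of vector fields, while carefully tracking the sign factors $(-1)^{\langle\,\cdot\,,\,\cdot\,\rangle}$.

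First, I would introduce shorthand for the signed cyclic sum. For a function $F$ of three homogeneous vector fields, set
\[
\sigma[F(X,Y,Z)] := (-1)^{\langle \deg(X),\deg(Z)\rangle}F(X,Y,Z) + (-1)^{\langle \deg(Y),\deg(X)\rangle}F(Y,Z,X) + (-1)^{\langle \deg(Z),\deg(Y)\rangle}F(Z,X,Y),
\]
so that the claim is exactly $\sigma[R_\nabla(X,Y)Z]=0$. Expanding the definition of $R_\nabla$, this sum splits into three pieces: the double-covariant-derivative terms $\sigma[\nabla_X\nabla_Y Z]$, their ``swapped'' counterparts $-\sigma[(-1)^{\langle \deg(X),\deg(Y)\rangle}\nabla_Y\nabla_X Z]$, and the bracket terms $-\sigma[\nabla_{[X,Y]}Z]$.

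Next, I would combine the first two pieces using torsion-freeness. By Definition~\ref{def:SymCon} applied to the Levi-Civita connection,
\[
\nabla_Y Z - (-1)^{\langle \deg(Y),\deg(Z)\rangle}\nabla_Z Y = [Y,Z],
\]
so $\nabla_X\nabla_Y Z$ pairs with $\nabla_X\nabla_Z Y$ (appearing in the cyclic permutation that moves $Y\mapsto Z,\,Z\mapsto X,\,X\mapsto Y$, or $Y\mapsto X,\,Z\mapsto Y,\,X\mapsto Z$, depending on which of the two blocks one inspects). A careful sign audit, tracking that each cyclic shift contributes $(-1)^{\langle \deg(X),\deg(Z)\rangle}$ etc.\ and using bilinearity of the scalar product, should collapse the six $\nabla\nabla$-terms into the signed sum
\[
\sigma\bigl[\nabla_X[Y,Z]\bigr]
= (-1)^{\langle \deg(X),\deg(Z)\rangle}\nabla_X[Y,Z] + (-1)^{\langle \deg(Y),\deg(X)\rangle}\nabla_Y[Z,X] + (-1)^{\langle \deg(Z),\deg(Y)\rangle}\nabla_Z[X,Y].
\]
Thus the cyclic sum reduces to $\sigma\bigl[\nabla_X[Y,Z]\bigr] - \sigma\bigl[\nabla_{[X,Y]}Z\bigr]$.

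Finally, I would apply torsion-freeness once more. For each cyclic slot, $\nabla_X[Y,Z]-(-1)^{\langle \deg(X),\deg([Y,Z])\rangle}\nabla_{[Y,Z]}X = [X,[Y,Z]]$, and noting $\deg([Y,Z])=\deg(Y)+\deg(Z)$, the surviving sum is
\[
(-1)^{\langle \deg(X),\deg(Z)\rangle}\bigl[X,[Y,Z]\bigr] + (-1)^{\langle \deg(Y),\deg(X)\rangle}\bigl[Y,[Z,X]\bigr] + (-1)^{\langle \deg(Z),\deg(Y)\rangle}\bigl[Z,[X,Y]\bigr],
\]
which is precisely the $\Z_2^n$-graded Jacobi identity for $\Vect(M)$ and therefore vanishes. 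The main obstacle is entirely bookkeeping: one must verify that the signs produced when cycling the arguments inside the $\sigma$-sum match the torsion-free sign conventions so that, first, the two families of $\nabla\nabla$-terms really collapse to $\sigma[\nabla_{(\cdot)}[\,\cdot\,,\,\cdot\,]]$, and second, that the same signs line up with the $\nabla_{[\,\cdot\,,\,\cdot\,]}(\cdot)$ terms so that the torsion-free identity produces exactly the graded Jacobiator. No new geometric ingredient beyond Theorem~\ref{trm:FundTheo} and the graded Jacobi identity is needed.
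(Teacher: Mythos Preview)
Your proposal is correct and follows essentially the same route as the paper: expand the signed cyclic sum of $R_\nabla$, use torsion-freeness once to collapse the six $\nabla\nabla$-terms into $\sigma[\nabla_X[Y,Z]]$, use it a second time to pair these with the $\nabla_{[X,Y]}Z$ terms, and recognise the resulting signed cyclic sum of double brackets as the $\Z_2^n$-graded Jacobi identity. The paper's proof differs only cosmetically, rescaling the cyclic sum by $(-1)^{\langle \deg(X),\deg(Z)\rangle}$ before carrying out the same manipulations.
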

\begin{proof}
The proof follows in almost identically the same way as the classical proof upon a little care with the sign factors and as such we just sketch the proof here. It is more convenient to re-write the LHS of of the first Bianchi identity as
$$Bi(X,Y,Z) := R_\nabla (X,Y) Z  + (-1)^{\langle\deg(Y), \deg(X) + \deg(Z)  \rangle} \, R_\nabla (Y,Z) X  + (-1)^{\langle\deg(Z), \deg(Y) + \deg(X)  \rangle} \, R_\nabla (Z,X) Y.$$
Then using the definition of the Riemann curvature tensor (Definition \ref{def:RiemCurTen}) we obtain
\begin{align*}
&Bi(X,Y,Z)  = \nabla_X (\nabla_Y Z - (-1)^{\langle \deg(Z), \deg(Y)\rangle} \, \nabla_Z Y) - \nabla_{[X,Y]}Z\\
& + (-1)^{\langle \deg(X), \deg(Y)+ \deg(Z)\rangle}\,\nabla_Y (\nabla_Z X - (-1)^{\langle \deg(Z), \deg(X)\rangle} \, \nabla_X Z) - (-1)^{\langle \deg(X), \deg(Y)+ \deg(Z)\rangle}\, \nabla_{[X,Y]}Z\\
&+ (-1)^{\langle \deg(Z), \deg(X)+ \deg(Y)\rangle}\,\nabla_Z (\nabla_X Y - (-1)^{\langle \deg(X), \deg(Y)\rangle} \, \nabla_Y X) - (-1)^{\langle \deg(Z), \deg(X)+ \deg(Y)\rangle}\, \nabla_{[Z,X]}Y,\\
& \textnormal{then using the fact that the Levi-Civita connection is torsionless (Definition \ref{def:SymCon}),}\\
&= \nabla_X [Y,Z] - \nabla_{[X,Y]}Z\\
& +(-1)^{\langle \deg(X), \deg(Y)+ \deg(Z)\rangle} \,\nabla_Y [Z,X] -(-1)^{\langle \deg(X), \deg(Y)+ \deg(Z)\rangle} \, \nabla_{[Y,Z]}X\\
& +(-1)^{\langle \deg(Z), \deg(X)+ \deg(Y)\rangle} \,\nabla_Z [X,Y] -(-1)^{\langle \deg(Z), \deg(X)+ \deg(Y)\rangle} \, \nabla_{[Z,X]}Y\\
&\textnormal{once again using the fact that the Levi-Civita connection is torsionless,}\\
&=  [X, [Y,Z] -[[X,Y],Z] -(-1)^{\langle \deg(X), \deg(Y) \rangle  } \, [Y, [X,Z]].
\end{align*}
This is nothing but the $\Z_2^n$-graded Jacobi identity for the standard Lie bracket of vector fields (here written in Loday--Leibniz form). Thus, $Bi(X,Y,Z) =0$ and so the  $\Z_2^n$-graded first Bianchi identity holds.
\end{proof}
\begin{remark}
Just as in the classical case, the proof of the  first Bianchi identity  only relies on the affine connection being symmetric, i.e., the torsion is zero. Thus, the first Bianchi identity holds for any symmetric affine connection and not just the Levi-Civita connection.
\end{remark}
\begin{proposition}
The  Riemann curvature tensor associated with the Levi-Civita connection satisfies
$$\langle R_\nabla(X,Y)Z| W \rangle_g = - (-1)^{\langle \deg(Z), \deg(W)\rangle} \, \langle R_\nabla(X,Y)W| Z \rangle_g\,,$$
for all $X,Y, Z$ and $W \in \Vect(M)$.
\end{proposition}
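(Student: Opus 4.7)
The plan is to derive the identity from the metric compatibility of the Levi-Civita connection applied twice, starting from the trivial identity
$$XY\langle Z|W\rangle_g - (-1)^{\langle \deg(X),\deg(Y)\rangle}\, YX\langle Z|W\rangle_g = [X,Y]\langle Z|W\rangle_g.$$
On the left, I would first apply Definition \ref{def:MetComp} to expand $Y\langle Z|W\rangle_g$, then let $X$ act on the result and apply metric compatibility once more. This produces four kinds of terms: a ``pure'' term $\langle \nabla_X\nabla_Y Z|W\rangle_g$, a ``pure'' term $\langle Z|\nabla_X\nabla_Y W\rangle_g$, and two ``mixed'' terms $\langle \nabla_Y Z|\nabla_X W\rangle_g$ and $\langle \nabla_X Z|\nabla_Y W\rangle_g$, each equipped with a sign factor built from $\langle \deg(X),\deg(Y)+\deg(Z)\rangle$ and $\langle \deg(Y),\deg(Z)\rangle$.

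Next I would perform the same expansion for $YX\langle Z|W\rangle_g$ and multiply by $(-1)^{\langle \deg(X),\deg(Y)\rangle}$. The key observation is that, because the scalar product $\langle -,-\rangle$ on $\Z_2^n$ is symmetric and bilinear, the two mixed terms obtained from the $XY$-expansion cancel exactly against their counterparts from the twisted $YX$-expansion. On the right-hand side, a single application of metric compatibility to $[X,Y]\langle Z|W\rangle_g$ yields $\langle \nabla_{[X,Y]}Z|W\rangle_g + (-1)^{\langle \deg(X)+\deg(Y),\deg(Z)\rangle}\langle Z|\nabla_{[X,Y]}W\rangle_g$. Reassembling, the surviving pure terms combine precisely into the expressions $R_\nabla(X,Y)Z = [\nabla_X,\nabla_Y]Z - \nabla_{[X,Y]}Z$ and $R_\nabla(X,Y)W$ from Definition \ref{def:RiemCurTen}, giving
$$\langle R_\nabla(X,Y)Z|W\rangle_g + (-1)^{\langle \deg(X)+\deg(Y),\deg(Z)\rangle}\langle Z|R_\nabla(X,Y)W\rangle_g = 0.$$
A final application of the $\Z_2^n$-graded symmetry of the metric (property \eqref{MetProp2}) rewrites the second inner product and produces the stated sign $-(-1)^{\langle \deg(Z),\deg(W)\rangle}$; note that $\deg(R_\nabla(X,Y)W) = \deg(X)+\deg(Y)+\deg(W)$, and the degree-counting in the sign factor works out once one remembers that $2\langle -,-\rangle \equiv 0 \pmod 2$.

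The main obstacle is purely bookkeeping: verifying that the mixed ``quadratic'' terms cancel. Concretely, the coefficient of $\langle \nabla_Y Z|\nabla_X W\rangle_g$ coming from $XY\langle Z|W\rangle_g$ is $(-1)^{\langle \deg(X),\deg(Y)+\deg(Z)\rangle}$, whereas the one coming from $(-1)^{\langle \deg(X),\deg(Y)\rangle}YX\langle Z|W\rangle_g$ is $(-1)^{\langle \deg(X),\deg(Y)\rangle+\langle \deg(X),\deg(Z)\rangle}$; bilinearity of $\langle -,-\rangle$ makes these equal, so they cancel in the difference. The same check applies to $\langle \nabla_X Z|\nabla_Y W\rangle_g$ by the symmetry of the scalar product. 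Once this cancellation is in hand, identifying the double-covariant-derivative combinations as Riemann tensor components and applying metric symmetry are routine.
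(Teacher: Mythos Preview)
Your proposal is correct and is essentially the same argument as the paper's, run in the opposite direction: the paper expands $\langle R_\nabla(X,Y)Z|W\rangle_g + (-1)^{\langle \deg(Z),\deg(W)\rangle}\langle R_\nabla(X,Y)W|Z\rangle_g$ via metric compatibility and shows it equals $\big(XY-(-1)^{\langle\deg(X),\deg(Y)\rangle}YX-[X,Y]\big)\langle Z|W\rangle_g=0$, whereas you start from this commutator identity and extract the curvature terms. The ingredients (two applications of metric compatibility, cancellation of the mixed $\langle\nabla Z|\nabla W\rangle_g$ terms, and the final metric-symmetry swap) are identical.
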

\begin{proof}
The proposition can be established via direct calculation along the same lines as the classical case.  First, directly from the definition of the Riemann curvature tensor and using the metric compatibility condition we observe that
\begin{align*}
\langle R_\nabla(X,Y)Z| W \rangle_g &=  \langle \nabla_X \nabla_Y Z |W  \rangle_g - (-1)^{\langle \deg(X), \deg(Y) \rangle} \,  \langle \nabla_Y \nabla_X Z |W  \rangle_g  - \langle \nabla_{[X,Y]}Z |W \rangle_g\\
&= X\langle \nabla_Y Z|W  \rangle_g -  (-1)^{\langle \deg(X), \deg(Y)+ \deg(Z) \rangle}\, \langle \nabla_Y Z | \nabla_X W \rangle_g\\
& - (-1)^{\langle \deg(X), \deg(Y)\rangle} \, Y \langle \nabla_X Z | W \rangle_g + (-1)^{\langle \deg(Y), \deg(Z)\rangle} \, \langle \nabla_X Z| \nabla_Y W \rangle_g - \langle \nabla_{[X,Y]}Z|W  \rangle_g.
\end{align*}
Then, after a little algebra
\begin{align*}
&\langle R_\nabla(X,Y)Z| W \rangle_g + (-1)^{\langle \deg(Z), \deg(W)\rangle} \, \langle R_\nabla(X,Y)W| Z \rangle_g\\
&  = X \langle \nabla_Y Z|W\rangle_g + (-1)^{\langle \deg(Z), \deg(Y)\rangle} \, X \langle Z| \nabla_Y W \rangle_g\\
&- (-1)^{\langle \deg(X), \deg(Y)\rangle}\, Y\langle \nabla_X Z  | W \rangle_g - (-1)^{\langle \deg(X) , \deg(Y) + \deg(Z) \rangle }\, Y \langle Z | \nabla_X W  \rangle_g\\
& - \langle \nabla_{[X,Y]}Z|W \rangle_g - (-1)^{\langle\deg(Z), \deg(X) + \deg(Y) \rangle}\, \langle Z | \nabla_{[X,Y]} \rangle_g - (-1)^{\langle \deg(Z), \deg(X) + \deg(Y)\rangle}\, \langle Z | \nabla_{[X,Y]}W \rangle_g,\\
& \textnormal{then using metric compatibility again}\\
& = \big ( XY - (-1)^{\langle\deg(X), \deg(Y) \rangle} \, Y Z - [X,Y]\big)\, \langle Z | W \rangle_g =0.
\end{align*}
\end{proof}

The covariant derivative of the Riemann curvature tensor is defined via the product rule for the covariant derivative, i.e.,
\begin{align*}
\big (  \nabla_Z R_\nabla\big )(X,Y)W &=  \nabla_Z \big ( R_\nabla(X,Y)W\big ) -  R_\nabla( \nabla_Z X,Y)W  \\
&- (-1)^{\deg(Z), \deg(X)} R_\nabla(X,\nabla_Z Y)W - (-1)^{\deg(Z), \deg(X)+ \deg(Y)} R_\nabla(X, Y)\nabla_Z W\,
\end{align*}
for all $X,Y,Z$ (homogeneous) and $W \in \Vect(M)$.
\begin{proposition}\label{prop:SecondBianc}
The  Riemann curvature tensor associated with the Levi-Civita connection satisfies the $\Z_2^n$-graded second Bianchi identity
$$(-1)^{\langle \deg(Y), \deg(Z)\rangle}\, \big (\nabla_Z   R_{\nabla}\big)(X,Y) + (-1)^{\langle \deg(X), \deg(Y)\rangle}\, \big (\nabla_Y   R_{\nabla}\big)(Z,X) + (-1)^{\langle \deg(Z), \deg(X)\rangle}\, \big (\nabla_X   R_{\nabla}\big)(Y,Z) =0.$$
\end{proposition}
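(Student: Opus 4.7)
The plan is to derive the second Bianchi identity from the graded Jacobi identity for the graded commutator of covariant derivatives, mimicking the classical operator-theoretic argument with careful attention to sign factors. First, I would rewrite the content of Definition \ref{def:RiemCurTen} as an operator identity on $\Vect(M)$:
$$[\nabla_X, \nabla_Y] \;=\; R_\nabla(X,Y) + \nabla_{[X,Y]},$$
where on the left the bracket is the graded commutator of endomorphisms, $[\nabla_X,\nabla_Y] := \nabla_X\circ\nabla_Y - (-1)^{\langle \deg(X),\deg(Y)\rangle}\nabla_Y\circ\nabla_X$.

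Next, I would invoke the graded Jacobi identity for this graded commutator applied to the triple $(\nabla_X,\nabla_Y,\nabla_Z)$, which reads
$$(-1)^{\langle \deg(X), \deg(Z)\rangle}[\nabla_X, [\nabla_Y,\nabla_Z]] + (-1)^{\langle \deg(Y), \deg(X)\rangle}[\nabla_Y, [\nabla_Z,\nabla_X]] + (-1)^{\langle \deg(Z), \deg(Y)\rangle}[\nabla_Z, [\nabla_X,\nabla_Y]] = 0.$$
Substituting the operator identity above for each inner bracket, each summand splits into a piece of the form $[\nabla_X,R_\nabla(Y,Z)]$ and a piece of the form $[\nabla_X,\nabla_{[Y,Z]}]$. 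The second type expands to $R_\nabla(X,[Y,Z]) + \nabla_{[X,[Y,Z]]}$, and the cyclic sum of the $\nabla_{[X,[Y,Z]]}$ contributions vanishes by the graded Jacobi identity for the Lie bracket on $\Vect(M)$.

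The residual work is to expand the cyclic sum of $[\nabla_X, R_\nabla(Y,Z)]$. Using the product rule defining $(\nabla_X R_\nabla)(Y,Z)$ stated just above the proposition, together with the fact that $R_\nabla(Y,Z)$ is an endomorphism of $\Vect(M)$ of $\Z_2^n$-degree $\deg(Y)+\deg(Z)$, one obtains the operator identity
$$[\nabla_X, R_\nabla(Y,Z)] \;=\; (\nabla_X R_\nabla)(Y,Z) + R_\nabla(\nabla_X Y, Z) + (-1)^{\langle \deg(X),\deg(Y)\rangle} R_\nabla(Y,\nabla_X Z).$$
Summing cyclically with the prescribed Jacobi signs, the $R_\nabla(\nabla_X Y,Z)$ and $(-1)^{\langle \deg(X),\deg(Y)\rangle}R_\nabla(Y,\nabla_X Z)$ contributions combine pairwise with the residual $R_\nabla(X,[Y,Z])$ terms by invoking the torsionless relation $\nabla_X Y - (-1)^{\langle \deg(X),\deg(Y)\rangle}\nabla_Y X = [X,Y]$ and the skew-symmetry \eqref{eqn:AntiSymRieTens} of $R_\nabla$; one checks that these terms cancel exactly. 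What survives is precisely the cyclic sum $(-1)^{\langle \deg(Y),\deg(Z)\rangle}(\nabla_Z R_\nabla)(X,Y) + \cdots$ appearing in the statement, which must therefore vanish.

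The main obstacle is the sign bookkeeping: each substitution introduces $\Z_2^n$-degree factors that must line up perfectly for the pairwise cancellations in the $R_\nabla(\nabla_X Y, Z)$ block to occur. The fact that the signs chosen in the statement match those delivered by the graded operator Jacobi identity, and that the torsion-free and skew-symmetry relations introduce exactly the compensating factors, is ultimately a consequence of the bilinearity of the scalar product on $\Z_2^n$ in the degrees, and is the reason the proof closes up without modification from the classical pattern.
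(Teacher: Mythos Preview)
Your argument is correct and complete; the operator identity $[\nabla_X,\nabla_Y]=R_\nabla(X,Y)+\nabla_{[X,Y]}$, the graded Jacobi identity for the commutator of endomorphisms, and the torsion-free relation combine exactly as you describe, and the sign bookkeeping does close up (each block of ``extra'' terms assembles into $R_\nabla(T_\nabla(\,\cdot\,,\,\cdot\,),\,\cdot\,)$, which vanishes).

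The paper takes a different but closely related route. Rather than starting from the Jacobi identity for the graded commutator of the operators $\nabla_X$, it begins directly with the cyclic sum $Bii(X,Y,Z)W$ appearing in the statement, uses the torsion-free condition to rewrite it as a cyclic sum of $(\nabla_X R_\nabla)(Y,Z)W + R_\nabla(X,Y)\nabla_Z W - R_\nabla([X,Y],Z)W$, then expands the curvature via its definition and observes that all terms involving two or three covariant derivatives cancel in the cyclic sum; only $\sum_{cyc}(-1)^{\langle\deg(X),\deg(Z)\rangle}\nabla_{[[X,Y],Z]}W$ survives, which vanishes by the Jacobi identity for the Lie bracket of vector fields. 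In effect the paper works ``forwards'' from the Bianchi expression to Jacobi, while you work ``backwards'' from the operator Jacobi identity to the Bianchi expression. Your approach has the advantage of making the conceptual origin of the identity transparent from the outset and of packaging the cancellations more systematically; the paper's approach is more hands-on and avoids invoking the Jacobi identity at the level of endomorphisms. Both rely on the same three ingredients (torsion-freeness, skew-symmetry of $R_\nabla$, and the Jacobi identity for $[\,\cdot\,,\,\cdot\,]$ on $\Vect(M)$).
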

\begin{proof}
The proposition  follow from a direct computation along the same lines as the classical calculation. We will only outline the proof here. We denote  the LHS of the second Bianchi identity as $Bii(X,Y,Z)W$, for arbitrary   $X,Y,Z$ (homogeneous) and $W \in \Vect(M)$.  Using the fact that the Levi-Civita connection is torsionless, we can, after a little algebra, write
$$
Bii(X,Y,Z)W  = \sum_{cyclic} (-1)^{\langle \deg(X), \deg(Z) \rangle}\, \big(  (\nabla_X R_\nabla)(Y,Z)W + R_\nabla(X,Y)\nabla_Z W - R_\nabla([X,Y],Z)W\big).
$$
Then using the definition of the Reimannian curvature tensor, we obtain an expression involving one, two and three $\nabla$'s.  However, all but the terms involving a single $\nabla$ cancel in the cyclic sum.  One then obtains
$$
 Bii(X,Y,Z)W  = \sum_{cyclic}(-1)^{\langle \deg(X), \deg(Z) \rangle}\, \nabla_{[[X,Y],Z]}W,
$$
and then we recognise the Jacobi identity for the Lie bracket of vector fields and so $Bii(X,Y,Z) =0$.
\end{proof}
\begin{proposition}
In  local coordinate $x^I$  the components of the Riemann curvature tensor, which are defined by $R_{\nabla}(\partial_I, \partial_J)\partial_K := R_{IJK}^{\:\:\:\:\:\:\: L}\partial_L$, of the Levi-Civita connection on a Riemannian $\Z_2^n$-manifold $(M,g)$ are given by
\begin{align*}
R_{IJK}^{\:\:\:\:\:\:\: L} & =  \frac{\partial \Gamma_{KJ}^{\:\:\: \:\: L}}{\partial x^I} - (-1)^{\langle \deg(I), \deg(J) \rangle} \, \frac{\partial \Gamma_{KI}^{\:\:\: \:\: L}}{\partial x^J}\\ & + (-1)^{\langle \deg(I), \deg(J) + \deg(K) + \deg(M)\rangle}\, \Gamma_{KJ}^{\:\:\: M}\Gamma_{M I}^{\:\:\: \:\: L}\\
& - (-1)^{\langle \deg(J),  \deg(K) + \deg(M)\rangle}\, \Gamma_{KI}^{\:\:\: M}\Gamma_{M J}^{\:\:\: \:\: L}\,.
\end{align*}
\end{proposition}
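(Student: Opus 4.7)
The plan is to carry out a direct coordinate computation starting from Definition \ref{def:RiemCurTen} specialised to the coordinate vector fields $X = \partial_I$, $Y = \partial_J$, $Z = \partial_K$. The first simplification is to note that $[\partial_I,\partial_J] = 0$ since partial derivatives commute, so the term $\nabla_{[X,Y]}Z$ drops out entirely and we are reduced to computing the $\Z_2^n$-graded commutator $\nabla_{\partial_I}\nabla_{\partial_J}\partial_K - (-1)^{\langle \deg(I),\deg(J)\rangle}\nabla_{\partial_J}\nabla_{\partial_I}\partial_K$.

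Next, I would expand $\nabla_{\partial_J}\partial_K = \Gamma_{KJ}^{\:\:\:M}\partial_M$ using the definition of the Christoffel symbols, and then apply $\nabla_{\partial_I}$ to the result. The key book-keeping input is that $\Gamma_{KJ}^{\:\:\:M}$ is a $C^\infty(M)$ function of degree $\deg(J) + \deg(K) + \deg(M)$; this follows because $\nabla$ is a degree-preserving map (Definition \ref{def:AffCon}), so $\deg(\Gamma_{KJ}^{\:\:\:M}\partial_M) = \deg(J) + \deg(K)$ for each $M$, forcing the stated degree of the Christoffel component. With this in hand, the Leibniz rule of Definition \ref{def:AffCon} yields
\begin{align*}
\nabla_{\partial_I}\nabla_{\partial_J}\partial_K
  &= \frac{\partial\Gamma_{KJ}^{\:\:\:L}}{\partial x^I}\,\partial_L
  + (-1)^{\langle \deg(I),\, \deg(J)+\deg(K)+\deg(M)\rangle}\,\Gamma_{KJ}^{\:\:\:M}\Gamma_{MI}^{\:\:\:L}\,\partial_L,
\end{align*}
and the analogous expression with $I$ and $J$ swapped.

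Subtracting and multiplying by the graded commutator sign gives the first two derivative terms directly. For the second product term, the accumulated sign factor is $\langle \deg(I),\deg(J)\rangle + \langle \deg(J),\deg(I)+\deg(K)+\deg(M)\rangle$; since $2\langle \deg(I),\deg(J)\rangle \equiv 0 \pmod 2$ and the scalar product is symmetric, this collapses to $\langle \deg(J),\deg(K)+\deg(M)\rangle$, exactly matching the sign in the proposition. I do not expect any real obstacle here: the entire argument is mechanical and the only thing that can go wrong is a miscounted sign, so the discipline required is simply to compute the degree of each intermediate object (the Christoffel symbol, the vector field $\Gamma_{KJ}^{\:\:\:M}\partial_M$, etc.) before invoking the Leibniz rule, and to verify that the two exponents agree with the claimed formula after using the symmetry of $\langle -,-\rangle$.
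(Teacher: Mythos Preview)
Your proposal is correct and is precisely the direct computation the paper alludes to; the paper's own proof consists of the single sentence ``This follows from direct computation.'' Your sign bookkeeping (in particular the degree $\deg(J)+\deg(K)+\deg(M)$ of $\Gamma_{KJ}^{\:\:\:M}$ and the cancellation $2\langle\deg(I),\deg(J)\rangle\equiv 0$) is exactly what is needed, and the observation $[\partial_I,\partial_J]=0$ is the right starting simplification.
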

\begin{proof}
This follows from direct computation.
\end{proof}

\subsection{The Ricci tensor and Ricci scalar}
The Ricci curvature tensor and Ricci scalar can also be generalised  to the setting of Riemannian $\Z_2^n$-manifolds.
\begin{definition}\label{def:RicCurv}
The \emph{Ricci curvature tensor} of an affine connection is the symmetric rank-2 covariant tensor defined as
$$Ric_\nabla(X,Y) = \tr\left( Z \mapsto \half \big ( R_\nabla(Z,X)Y + (-1)^{\langle \deg(X), \deg(Y)  \rangle} \,R_\nabla(Z,Y)X \big) \right),$$
for all $X$ and $Y \in \Vect(M)$. Here $\tr$ is the $\Z_2^n$-graded trace (see \cite[Theorem 1.]{Covolo:2012}).
\end{definition}
\begin{remark}
 One can show that the tensor  $\tr\left( Z \mapsto  R_\nabla(X,Y)Z  \right)$ vanishes if the Riemannian metric is even, but not necessarily so if the Riemannian metric is odd. Combining this with the first Bianchi identity (see Proposition \ref{prop:FirstBianc}) one can prove that $ \tr\left( Z \mapsto  R_\nabla(Z,X)Y  \right)$ is not symmetric if the Riemannian metric is odd. Thus,  we define the Ricci curvature tensor as the symmetrisation of a trace in order to obtain a tensor with a well-defined symmetry for both even and odd Riemannian metrics.
\end{remark}
Defining $R_{IJ} := Ric_\nabla(\partial_I, \partial_J)$, it is clear that the symmetry equates to $R_{IJ} = (-1)^{\langle \deg(I), \deg(J) \rangle} R_{JI}$.
\begin{definition}\label{def:RicScal}
Let $(M,g)$ be a Riemannian $\Z_2^n$-manifold. The \emph{Ricci scalar} (associated with the Levi-Civita connection) is defined as
$$S_\nabla := \tr_g\big(Ric_\nabla \big) \in C^\infty(M),$$
where $\tr_g$ is the metric trace (see Definition \ref{def:trace}).
\end{definition}
Note that $\deg(S_\nabla) = \deg(g)$, and that this has some direct consequences.
\begin{proposition}
Let $(M,g)$ be a Riemannian $\Z_2^n$-manifold, then the Ricci scalar $S_\nabla$ cannot be a  non-zero constant if the metric $g$ has non-zero $\Z_2^n$-degree.
 \end{proposition}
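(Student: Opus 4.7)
The plan is to exploit the remark immediately preceding the proposition, namely that $\deg(S_\nabla) = \deg(g)$. Under the standing assumption the Ricci scalar is therefore a homogeneous global function of non-zero $\Z_2^n$-degree.

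First I would recall the short exact sequence \eqref{eqn:SES}, which realises the ideal sheaf $\mathcal{J}_M$ as the kernel of the canonical projection $\zve \colon \cO_M \to C^\infty_{|M|}$. By the very definition of $\mathcal{J}_M$, every homogeneous global section of non-zero $\Z_2^n$-degree belongs to $\mathcal{J}_M(|M|) = \ker \zve_{|M|}$; informally, $\zve$ sets the formal coordinates to zero, so any quantity with a non-trivial $\Z_2^n$-weight lies in the kernel. Applied to $S_\nabla$, whose degree coincides with $\deg(g) \neq 0$ by hypothesis, this gives $\zve_{|M|}(S_\nabla) = 0$ as an honest smooth function on $|M|$.

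Second, I would observe that a non-zero real constant $c \in \R$, viewed as a global section of $\cO_M$, has $\Z_2^n$-degree zero and its image under $\zve_{|M|}$ is the constant function $c$ on $|M|$, which is nowhere-vanishing. Hence $\zve_{|M|}(c) \neq 0$. Combining this with the previous step: if one had $S_\nabla = c$ with $c \neq 0$, then applying $\zve_{|M|}$ would yield the contradictory equality $0 = c$. Therefore $S_\nabla$ cannot be a non-zero constant, which is exactly the claim.

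There is essentially no technical obstacle here; the proposition is a direct consequence of the degree-bookkeeping already established, together with the elementary fact that constants sit in degree zero and survive the reduction map $\zve$. The only thing worth being slightly careful about is distinguishing the formal generators from genuine scalars in $\R$, so that the phrase ``non-zero constant'' unambiguously refers to a non-zero element of $\R \subset C^\infty(M)$, for which the argument through $\zve$ applies verbatim.
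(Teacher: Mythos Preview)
Your proposal is correct and is essentially the same argument as the paper's: both rest on the observation that $\deg(S_\nabla)=\deg(g)\neq 0$, together with the elementary fact that the only constant of non-zero $\Z_2^n$-degree is zero. The paper phrases this locally (a non-zero homogeneous section of non-zero degree must involve the formal coordinates), while you phrase it more invariantly via the projection $\zve$ and the ideal sheaf $\mathcal{J}_M$, but the content is identical.
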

\begin{proof}
It is sufficient to consider this question locally and so we can employ local coordinates. If the metric is of non-zero $\Z_2^n$-degree, then a non-zero Ricci scalar must locally be at least linear in one or more of the formal coordinates. Thus it cannot be constant, i.e., locally the derivative  of the Ricci scalar with respect to at least one of the formal coordinates is non-vanishing. The only possible constant of non-zero $\Z_2^n$-degree is zero.
\end{proof}
\noindent \textbf{Observation:} The only possible Riemannian $\Z_2^n$-manifolds with constant, but non-zero, Ricci scalar are those whose Riemannian metric is degree zero.
\begin{proposition}\label{prop:OddRVan}
Let $(M,g)$ be a Riemannian $\Z_2^n$-manifold with an odd metric. Then the Ricci scalar vanishes identically.
\end{proposition}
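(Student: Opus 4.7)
The proof is essentially an immediate consequence of Lemma \ref{lem:trace} applied to the Ricci tensor, so the plan is very short. The key observation is that $Ric_\nabla$ is manifestly a $\Z_2^n$-symmetric rank-two covariant tensor. This symmetry is built into Definition \ref{def:RicCurv}: the map $Z \mapsto \tfrac{1}{2}(R_\nabla(Z,X)Y + (-1)^{\langle \deg(X), \deg(Y)\rangle} R_\nabla(Z,Y)X)$ is symmetrised precisely so that in local components one has $R_{IJ} = (-1)^{\langle \deg(I), \deg(J)\rangle} R_{JI}$, as noted immediately after the remark following Definition \ref{def:RicCurv}.

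Given this, the plan is simply to invoke Lemma \ref{lem:trace}. Recall that lemma states: if $g$ is an odd Riemannian metric and $\omega_{JI}$ is $\Z_2^n$-symmetric, then $\tr_g(\omega) = 0$. The reason, as established in the proof of that lemma via Proposition \ref{prop:SymInv}, is that for odd $g$ the inverse metric $g^{IJ}$ is skew-symmetric, so in the contraction defining the metric trace one pairs a symmetric tensor with a skew-symmetric tensor and obtains zero. Applying this with $\omega = Ric_\nabla$ gives
\[
S_\nabla \;=\; \tr_g(Ric_\nabla) \;=\; 0,
\]
which is exactly the claim.

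There is no real obstacle here: the only thing one needs to check is that the symmetry noted in components after Definition \ref{def:RicCurv} is the $\Z_2^n$-symmetry required by Lemma \ref{lem:trace}, and this follows directly from unpacking $R_{IJ} = Ric_\nabla(\partial_I,\partial_J)$ together with the symmetrisation in the definition. Consequently the proof reduces to one line: by the symmetry of the Ricci tensor and the odd-metric case of Lemma \ref{lem:trace}, $S_\nabla$ vanishes identically.
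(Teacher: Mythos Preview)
Your proof is correct and follows essentially the same approach as the paper: both use that the Ricci tensor is $\Z_2^n$-symmetric while, for an odd metric, the inverse metric is skew-symmetric (Proposition \ref{prop:SymInv}), so the contraction defining $S_\nabla$ vanishes. The only cosmetic difference is that you invoke Lemma \ref{lem:trace} as a packaged statement, whereas the paper writes out the local coordinate expression for $S_\nabla$ and appeals to Proposition \ref{prop:SymInv} directly.
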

\begin{proof}
In any set of local coordinates, we see that
$$S_\nabla = (-1)^{\langle  \deg(g), \deg(I) + \deg(J)\rangle}\, R_{JI}g^{IJ}\,(-1)^{\langle  \deg(J), \deg(J)\rangle}.$$
Using the fact that the Ricci tensor is symmetric in its indices, while the inverse metric is skew-symmetric for an odd Riemannian metric (see Proposition \ref{prop:SymInv}),  the contraction of the two will vanish.
\end{proof}
\subsection{The covariant divergence and the connection Laplacian}
The notions of the gradient of a function, the covariant divergence of a vector field and the connection Laplacian all generalise to the setting of $\Z_2^n$-manifolds. We restrict attention to the covariant divergence and the connection Laplacian (with respect to the Levi-Civita connection)  to avoid subtleties of introducing Berezin volumes on $\Z_2^n$-manifolds (see \cite{Poncin:2016}). The reader should note that unless the Riemannian metric is of degree zero, there is no canonical Berezin volume. Thus, in order not to introduce extra structure, one should consider the connection Laplacian and not the  Laplace--Beltrami operator.   
\begin{definition}
Let $f \in C^\infty(M)$ be an arbitrary function on a Riemannian $\Z_2^n$-manifold $(M,g)$. The \emph{gradient of $f$} is the unique vector field $\textrm{grad}_g f$ such that 
$$X(f) = (-1)^{\langle \deg(f), \deg(g) \rangle }\, \langle X | \textrm{grad}_g f \rangle_g ,$$
for all  $X \in \Vect(M)$.
\end{definition}
In local coordinates we have
$$\textrm{grad}_g f = (-1)^{\langle \deg(f), \deg(g) \rangle + \langle \deg(f)+ \deg(g), \deg(J) \rangle } \, \frac{\partial f}{\partial x^J}g^{JI} \frac{\partial}{\partial x^I}.$$
Note that $\deg(\textrm{grad}_g f) = \deg(f) + \deg(g)$. It is easy to observe that
\begin{equation}\label{eqn:GradProd}
\textrm{grad}_g (f f') = (-1)^{\langle\deg(f), \deg(g) \rangle}\,f \, \textrm{grad}_g f' + (-1)^{\langle \deg(f'), \deg(f) + \deg(g)\rangle} \,  f ' \, \textrm{grad}_g f\,,
\end{equation}
for homogeneous $f$ and $f' \in C^\infty(M)$. Moreover, it is clear from the definition of the gradient that if $f = \Id$, i.e., the constant function of value one, then its gradient is zero.
\begin{definition}
Let $(M,g)$ be a Riemannian $\Z_2^n$-manifold and let $\nabla$ be the associated Levi-Civita connection. The \emph{covariant divergence} is the map
$$\textrm{Div}_\nabla : \Vect(M) \rightarrow C^\infty(M),$$
given by
$$ \textrm{Div}_\nabla X := \tr \big (Z \mapsto \nabla_Z X  \big),$$
for any arbitrary $X \in \Vect(M)$.
\end{definition}
Locally, the above definition amounts to 
$$ \textrm{Div}_\nabla X = (-1)^{\langle \deg(I), \deg(I)+ \deg(X) \rangle}\, \big( \nabla_I X\big)^I =  (-1)^{\langle \deg(I), \deg(I)+ \deg(X) \rangle}\, \frac{\partial X^I}{\partial x^I} + (-1)^{\langle \deg(I), \deg(I)+ \deg(J) \rangle}\, X^J \Gamma_{J I}^{\:\:\: \:\: I}.$$
The covariant divergence satisfies the expected properties (it is straightforward to check these)
\begin{align}
&\textrm{Div}_\nabla (X + c Y) = \textrm{Div}_\nabla X + c \, \textrm{Div}_\nabla Y , \textnormal{and}\\
& \label{eqn:DivfX} \textrm{Div}_\nabla (fX) = (-1)^{\langle \deg(f), \deg(X)+ \deg(g) \rangle}\, \langle X | \textrm{grad}_g f \rangle_g + f \, \textrm{Div}_\nabla X,    
\end{align}
with $X$ and $Y \in \Vect(M)$ and $c \in \R$, and and $f \in C^\infty(M)$. The connection Laplacian is then defined as ``div of the grad". More carefully we have the following definition.
\begin{definition}
Let $(M,g)$ be a Riemannian $\Z_2^n$-manifold and let $\nabla$ be the associated Levi-Civita connection.  The \emph{connection Laplacian} (acting on functions) is the differential operator of $\Z_2^n$-degree $\deg(g)$ defined as
$$\Delta_g f =  \textrm{Div}_\nabla( \textrm{grad}_g f ), $$
for any and all $f \in C^\infty(M)$.
\end{definition}
Directly from the definition of the gradient, one can see that the connection Laplacian has no zeroth-order term, i.e., $\Delta_g \Id =0$.  By construction, it is obvious that the connection Laplacian is at most second-order.  The connection Laplacian on a Riemannian $\Z_2^n$-manifold is exactly what one expects given the classical connection Laplacian on a Riemannian manifold.   
\begin{proposition}\label{prop:ConLapVioDer}
Let $(M,g)$ be Riemannian $\Z_2^n$-manifold with an even Riemannian metric. Furthermore, let $\Delta_g$ be the connection Laplacian associated with the Levi-Civita connection. Then the violation of the Leibniz rule over the product $f f'$ for $\Delta_g$ is given by  $(-1)^{\langle \deg(f'), \deg(g)\rangle}\,  2\,\langle\textrm{grad}_g f | \textrm{grad}_g f' \rangle_g\,$, where  $f$ and $f'$ are  homogeneous but otherwise arbitrary functions.
\end{proposition}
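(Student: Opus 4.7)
The plan is to expand $\Delta_g(ff')$ using the definition $\Delta_g f = \textrm{Div}_\nabla(\textrm{grad}_g f)$ together with the two product rules already established: the gradient product rule \eqref{eqn:GradProd} and the divergence product rule \eqref{eqn:DivfX}. First, I apply \eqref{eqn:GradProd} to $\textrm{grad}_g(ff')$, obtaining two summands each of which is a function times a gradient. Then, applying $\textrm{Div}_\nabla$ to each summand and invoking \eqref{eqn:DivfX}, each of these produces a pairing term (coming from the gradient of the scalar factor) plus a term of the form $f \cdot \Delta_g f'$ or $f' \cdot \Delta_g f$ (up to sign). After collecting, the four terms split into a ``Leibniz part'' matching the expected graded Leibniz rule for an operator of degree $\deg(g)$, and two residual terms of the form $\langle \textrm{grad}_g f \mid \textrm{grad}_g f'\rangle_g$ and $\langle \textrm{grad}_g f' \mid \textrm{grad}_g f\rangle_g$, each carrying its own sign prefactor.

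Next I would use the symmetry of the Riemannian metric (property \eqref{MetProp2}) to convert $\langle \textrm{grad}_g f'\mid\textrm{grad}_g f\rangle_g$ into $\langle \textrm{grad}_g f\mid\textrm{grad}_g f'\rangle_g$ up to a sign determined by the degrees of the two gradients, namely $\deg(f)+\deg(g)$ and $\deg(f')+\deg(g)$. Here the evenness hypothesis enters decisively: since $g$ is even, $\langle \deg(g),\deg(g)\rangle=0$ in $\Z_2$, so the four cross-terms in the expansion $\langle\deg(f)+\deg(g),\deg(f')+\deg(g)\rangle$ collapse, and the two residual pairings become a common multiple of $\langle \textrm{grad}_g f\mid\textrm{grad}_g f'\rangle_g$. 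Combining them yields the factor of $2$, and the bookkeeping of the accumulated $\Z_2^n$-signs gives the prefactor $(-1)^{\langle\deg(f'),\deg(g)\rangle}$ as claimed.

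The main obstacle is the sign bookkeeping: each application of \eqref{eqn:GradProd} and \eqref{eqn:DivfX} introduces a scalar product of degrees, and one must ensure that after using $2\langle a,b\rangle \equiv 0 \pmod 2$ and the evenness identity $\langle\deg(g),\deg(g)\rangle = 0$ the residual signs on the two pairing terms actually coincide, so that the two terms add (giving the factor $2$) rather than cancel. Once this alignment is verified, the identification of the ``Leibniz part'' with $(\Delta_g f) f' + (-1)^{\langle\deg(f),\deg(g)\rangle}\, f\, \Delta_g f'$ is immediate, and the remainder is exactly the stated violation.
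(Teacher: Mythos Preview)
Your proposal is correct and follows essentially the same route as the paper: expand $\Delta_g(ff')$ via \eqref{eqn:GradProd} and \eqref{eqn:DivfX}, separate the two Laplacian terms from the two pairing terms, and use the symmetry of the metric together with $\langle\deg(g),\deg(g)\rangle=0$ to merge the pairings into $2\,\langle\textrm{grad}_g f\mid\textrm{grad}_g f'\rangle_g$ with the stated sign. The only cosmetic difference is that the paper carries out the computation for a metric of arbitrary degree, obtaining the coefficient $\bigl(1+(-1)^{\langle\deg(g),\deg(g)\rangle}\bigr)$ before specialising to the even case, whereas you invoke evenness at the moment of combining the two pairing terms.
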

\begin{proof} 
We prove the proposition by  constructing the `anomaly' to the Leibniz rule for the connection Laplacian.  Let $(M,g)$ be a Riemannian $\Z_2^n$-manifold where the  Riemannian metric is of arbitrary $\Z_2^n$-degree (we will not insist that it is even at this stage). Using \eqref{eqn:GradProd}   and \eqref{eqn:DivfX} we obtain
\begin{align*}
\Delta_g(ff')& = \textrm{Div}_\nabla \left( (-1)^{\langle \deg(f), \deg(g) \rangle} \, f \, \textrm{grad}_g f' + (-1)^{\langle \deg(f'), \deg(f)+\deg(g) \rangle} \, f' \, \textrm{grad}_g f \right)\\
& = (-1)^{\langle \deg(f), \deg(f')+ \deg(g)\rangle} \, \langle \textrm{grad}_g f' | \textrm{grad}_g f  \rangle_g + (-1)^{\langle \deg(f), \deg(g)\rangle}\, f (\Delta_g f')\\
& + (-1)^{\langle \deg(f'), \deg(g) \rangle }\, \langle \textrm{grad}_g f | \textrm{grad}_g f'  \rangle_g +(-1)^{\langle \deg(f') , \deg(f)+ \deg(g) \rangle}\,  f'(\Delta_g f)\\
&= (\Delta_g f) f' + (-1)^{\langle \deg(g), \deg(f) \rangle} \, f (\Delta_g f')\\
& + (-1)^{\langle \deg(f'), \deg(g)\rangle}\left(1 + (-1)^{\langle\deg(g), \deg(g) \rangle} \right) \, \langle\textrm{grad}_g f | \textrm{grad}_g f' \rangle_g\,,
\end{align*}
with $f$ and $f' \in C^\infty(M)$. Then, if the Riemannian metric is even the `anomaly' is precisely as claimed in the proposition. 
\end{proof}
The proof of Proposition \ref{prop:ConLapVioDer} implies that for odd Riemannian metrics the connection Laplacian is a vector field.  However, as we shall show, the connection Laplacian associated with an odd Riemannian metric vanishes. Before we can prove this we need a lemma.
\begin{lemma}\label{lem:ContLC}
Let $(M,g)$ be a Riemannian $\Z_2^n$-manifold and let $\Gamma_{J I}^{\:\:\: \:\: K}$ be the associated Christoffel symbols in some chosen coordinate system. Then
\begin{align}\label{eqn:ContLC} 
\nonumber (-1)^{\langle \deg(I), \deg(I)+ \deg(L)\rangle} \, 2 \, g^{JL}\Gamma_{J I}^{\:\:\: \:\: I} &= (-1)^{\langle \deg(I), \deg(I) \rangle} \, g^{JI}\left(\frac{\partial g_{IM}}{\partial x^L} \right)g^{MI}\\
& + (-1)^{\langle \deg(I), \deg(I)+ \deg(L)\rangle} \, (1-(-1)^{\langle \deg(g), \deg(g) \rangle}) \, g^{JL}\left( \frac{\partial g_{LM}}{\partial x^I} \right)g^{MI}\,.
\end{align}
In particular, if the Riemannian metric is odd, then 
\begin{equation}\label{eqn:oddContLC}
 g^{JL}\Gamma_{J I}^{\:\:\: \:\: I} =  g^{JL}\left( \frac{\partial g_{LM}}{\partial x^I} \right)g^{MI}\,.
\end{equation}
\end{lemma}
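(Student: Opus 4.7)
The plan is to substitute the explicit Koszul-type formula for the Christoffel symbols given in Proposition \ref{prop:ChriSymLoc} into the contracted expression $g^{JL}\Gamma_{JI}^{\:\:\: I}$ and then reorganise the resulting three terms using the $\Z_2^n$-graded symmetry $g_{IJ}=(-1)^{\langle\deg(I),\deg(J)\rangle}g_{JI}$ and the symmetry of the inverse metric (Proposition \ref{prop:SymInv}).

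First, introducing a fresh dummy index $M$ for the inverse-metric factor inside the Christoffel symbol, I would write
\begin{align*}
2\,g^{JL}\Gamma_{JI}^{\:\:\: I} &= g^{JL}\!\left(\tfrac{\partial g_{JM}}{\partial x^I}\right)\!g^{MI}\\
&\quad + (-1)^{\langle\deg(I),\deg(J)\rangle}\,g^{JL}\!\left(\tfrac{\partial g_{IM}}{\partial x^J}\right)\!g^{MI}\\
&\quad - (-1)^{\langle\deg(M),\deg(I)+\deg(J)\rangle}\,g^{JL}\!\left(\tfrac{\partial g_{IJ}}{\partial x^M}\right)\!g^{MI},
\end{align*}
and call these $T_1,T_2,T_3$. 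Term $T_2$ already has the derivative taken along the summed index $J$ which is paired with $g^{JL}$, so after relabelling the pair of dummy indices and invoking Proposition \ref{prop:SymInv} once to commute $g^{JL}$ into position, it contributes directly one copy of the second piece on the right-hand side of \eqref{eqn:ContLC}, up to controlled signs.

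For $T_1$ and $T_3$ the derivative is hit on an index that is itself contracted with one of the inverse-metric factors. For each, I would apply Proposition \ref{prop:SymInv} to commute the relevant $g^{\cdot\cdot}$ past the derivative, which produces a factor of $(-1)^{\langle\deg(g),\deg(g)\rangle}$ (together with the required bilinear $\langle\cdot,\cdot\rangle$ sign factors coming from the swap), and use the symmetry $\partial g_{IJ}/\partial x^M = (-1)^{\langle\deg(I),\deg(J)\rangle}\,\partial g_{JI}/\partial x^M$ in $T_3$ to match its index layout with that of $T_1$. The sum $T_1+T_3$ then splits naturally into two parts: one part reproduces the first summand $R_1$ of \eqref{eqn:ContLC}, while the other adds a further $-(-1)^{\langle\deg(g),\deg(g)\rangle}$ multiple of the $R_2$-shaped term. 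Combining with $T_2$ and multiplying through by $(-1)^{\langle\deg(I),\deg(I)+\deg(L)\rangle}$ produces exactly \eqref{eqn:ContLC}, the coefficient $1-(-1)^{\langle\deg(g),\deg(g)\rangle}$ being the fingerprint of the single use of Proposition \ref{prop:SymInv}.

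For the odd-metric specialisation \eqref{eqn:oddContLC}, one has $\langle\deg(g),\deg(g)\rangle=1$, so the factor $1-(-1)^{\langle\deg(g),\deg(g)\rangle}=2$ doubles the $R_2$ contribution, while the $R_1$ contribution vanishes: by Proposition \ref{prop:SymInv} the inverse metric is now skew-symmetric in the appropriate graded sense, and in $R_1$ it is contracted against a factor that carries the opposite symmetry in the matching index pair, so the contraction is zero by the same mechanism as in Lemma \ref{lem:trace}. Cancelling the common factor of $2$ and the overall sign yields \eqref{eqn:oddContLC}.

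The main obstacle is not conceptual but combinatorial: carefully tracking the $\Z_2^n$-graded sign factors, especially the $\langle\deg(I),\deg(I)\rangle$-type signs built into the graded trace convention, and the $\langle\deg(g),\deg(g)\rangle$ produced each time the inverse metric is swapped via Proposition \ref{prop:SymInv}. The appearance of the combination $1-(-1)^{\langle\deg(g),\deg(g)\rangle}$ in the statement, together with the collapse of $R_1$ in the odd case, provides a strong internal consistency check on the sign bookkeeping.
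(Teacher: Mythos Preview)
Your overall strategy---expand the Christoffel symbols via Proposition~\ref{prop:ChriSymLoc}, then use the graded symmetry of $g$ together with Proposition~\ref{prop:SymInv} to combine terms---is exactly the route taken in the paper. The odd-metric specialisation is also argued correctly and in the same way.

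Where your write-up goes wrong is in the bookkeeping of \emph{which} of the three terms feed into which piece of \eqref{eqn:ContLC}. Your $T_2$, namely $(-1)^{\langle\deg(I),\deg(J)\rangle}\,g^{JL}(\partial_{J} g_{IM})g^{MI}$, has both indices of $g_{IM}$ contracted against the \emph{same} inverse-metric factor $g^{MI}$; that is precisely the ``trace-like'' pattern of the \emph{first} summand $R_1$ on the right of \eqref{eqn:ContLC}, not the second. Conversely, in both $T_1$ and $T_3$ the two lower metric indices are split between $g^{JL}$ and $g^{MI}$, which is the $R_2$ pattern. After the dummy-index swap $I\leftrightarrow M$ in $T_3$ (this is the actual move---one does not ``commute $g^{\cdot\cdot}$ past the derivative''), followed by the symmetries of $g$ and of $g^{-1}$ from Proposition~\ref{prop:SymInv}, $T_3$ becomes $(-1)^{\langle\deg(g),\deg(g)\rangle}$ times the $T_1$-structure. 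Hence $T_1$ and $T_3$ do \emph{not} split into an $R_1$-part and an $R_2$-part as you claim; they collapse to a single $R_2$-shaped expression with coefficient $1-(-1)^{\langle\deg(g),\deg(g)\rangle}$, while $T_2$ alone supplies $R_1$. With this corrected grouping the argument goes through exactly as in the paper, and your concluding paragraph on the odd case is then fully justified.
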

\begin{proof}
From Proposition \ref{prop:ChriSymLoc} we observe that
\begin{align*}
(-1)^{\langle \deg(I), \deg(I)+ \deg(L)\rangle} \, 2 \,\Gamma_{LI}^{\:\:\: I} &= (-1)^{\langle \deg(I), \deg(I)+ \deg(L)\rangle}\, \left(\frac{\partial g_{LM}}{\partial x^I}\right) g^{MI} \\
&    + (-1)^{\langle \deg(I), \deg(I)\rangle} \, \left(\frac{\partial g_{IM}}{\partial x^L}\right) g^{MI}\\
& - (-1)^{ \langle \deg(M), \deg(I)+ \deg(L) \rangle + \langle \deg(I), \deg(I) + \deg(L) \rangle} \, \left(\frac{\partial g_{IL}}{\partial x^M}\right)g^{IM} \,.
\end{align*}
Relabelling the contracted indices of the  last term by $I \rightarrow M$ and $M \rightarrow I$, and then using the symmetry of the Riemannian metric and its inverse (see Proposition \ref{prop:SymInv})  we see that 
\begin{align*}  
&(-1)^{ \langle \deg(M), \deg(I)+ \deg(L) \rangle + \langle \deg(I), \deg(I) + \deg(L) \rangle} \, \left(\frac{\partial g_{IL}}{\partial x^M}\right)g^{IM}\\ & =  (-1)^{ \langle \deg(I), \deg(I)+ \deg(L) \rangle + \langle \deg(g), \deg(g) \rangle} \, \left(\frac{\partial g_{LM}}{\partial x^I}\right)g^{IM}\,.
\end{align*}
This establishes the validity of \eqref{eqn:ContLC}. If the Riemannian metric is odd, then the first term of \eqref{eqn:ContLC} vanishes as this involves the contraction of a symmetric and a skew-symmetric object, while the second term remains. This establishes \eqref{eqn:oddContLC}.
\end{proof}
\begin{theorem}\label{trm:OddLapVanish}
The connection Laplacian on a Riemannian $\Z_2^n$-manifold $(M,g)$ with an odd Riemannian metric identically vanishes, i.e., it is a zero map.
\end{theorem}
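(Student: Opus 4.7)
The plan is to establish the theorem via a direct local-coordinate computation. We fix a coordinate chart $x^I$ and write $\mathrm{grad}_g f = X^I \partial_I$ with $X^I$ given by the formula preceding the theorem. We then expand
$$\Delta_g f = \mathrm{Div}_\nabla(\mathrm{grad}_g f) = (-1)^{\langle \deg(I), \deg(I)+ \deg(X) \rangle}\, \frac{\partial X^I}{\partial x^I} + (-1)^{\langle \deg(I), \deg(I)+ \deg(J) \rangle}\, X^J \Gamma_{JI}^{\:\:\:\:\: I}$$
and apply Leibniz to $\partial_I X^I$, producing a ``second-order'' piece carrying $\partial^2 f/\partial x^J \partial x^I\cdot g^{JI}$ and a ``first-order'' piece carrying $\partial f/\partial x^J \cdot \partial g^{JI}/\partial x^I$. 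The strategy is to show both pieces vanish separately under the odd-metric assumption.

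For the second-order piece, I would invoke Proposition \ref{prop:SymInv}: when $g$ is odd, the inverse metric $g^{IJ}$ is graded skew-symmetric, whereas the mixed second partial derivatives are graded symmetric in the indices $I,J$. Once the correct sign factors are tracked, the contraction of a graded-symmetric tensor with a graded-skew tensor is zero, so this piece disappears. This is the same mechanism that yields the vanishing of the Ricci scalar in Proposition \ref{prop:OddRVan}.

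For the first-order piece, I would differentiate the defining identity $g^{JI}g_{IK}=\delta^J_K$ to get the standard expression
$$\frac{\partial g^{JI}}{\partial x^I} = - (-1)^{\text{signs}}\, g^{JL} \Bigl(\frac{\partial g_{LM}}{\partial x^I}\Bigr)g^{MI},$$
and then compare the resulting contribution with the Christoffel contribution $X^J \Gamma_{JI}^{\:\:\:\:\:I}$. Here Lemma \ref{lem:ContLC}, and in particular its odd-metric specialisation \eqref{eqn:oddContLC}, gives exactly $g^{JL}\Gamma_{JI}^{\:\:\:\:\: I} = g^{JL}(\partial g_{LM}/\partial x^I) g^{MI}$, matching (up to the sign produced by differentiating the inverse metric identity) the contribution from $\partial_I g^{JI}$. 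A careful bookkeeping of the signs then shows the two contributions cancel.

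The routine computations are harmless, but the main obstacle will be the sign bookkeeping: each of the prefactors $(-1)^{\langle\cdot,\cdot\rangle}$ in the definitions of $\mathrm{grad}_g$, $\mathrm{Div}_\nabla$, and in Proposition \ref{prop:SymInv} and Lemma \ref{lem:ContLC} must be tracked simultaneously, and the fact that $\langle \deg(g),\deg(g)\rangle = 1$ for an odd metric must be used at the appropriate spots (for instance to flip relative signs so that the $\partial_I g^{JI}$ contribution cancels rather than doubles the Christoffel contribution). Once these signs align, the theorem follows since $\Delta_g f = 0$ for arbitrary homogeneous $f$, and then for all $f\in C^\infty(M)$ by linearity.
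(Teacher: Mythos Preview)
Your proposal is correct and follows essentially the same route as the paper's proof: a local-coordinate expansion of $\Delta_g f$ into a second-order piece (killed by contracting the graded-symmetric Hessian against the graded-skew inverse metric via Proposition~\ref{prop:SymInv}) and a first-order piece (where the $\partial_I g^{JI}$ contribution, obtained by differentiating $g^{JI}g_{IK}=\delta^J_K$, cancels the Christoffel contribution via Lemma~\ref{lem:ContLC} and \eqref{eqn:oddContLC}). The only difference is cosmetic---the paper writes out the three-term local expression for $\Delta_g f$ up front and also cites the proof of Proposition~\ref{prop:ConLapVioDer} as an alternative justification for the vanishing of the second-order term.
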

\begin{proof}
A direct computation from the definitions and using 
$$\frac{\partial g^{KI}}{\partial x^I} = - (-1)^{\langle\deg(I), \deg(K) + \deg(J)+ \deg(g) \rangle} \, g^{KJ}\left(\frac{\partial g_{JL}}{\partial x^I} \right)g^{LI}\,,$$
one can quickly show that in local coordinates
\begin{align*}
\Delta_g f &= (-1)^{\langle \deg(J), \deg(J)\rangle }\, g^{JI}\frac{\partial^2 f}{\partial x^I \partial x^J}\\
& - (-1)^{\langle \deg(I), \deg(I) + \deg(L) \rangle + \langle \deg(J) , \deg(J)\rangle } \, g^{JL}\left(\frac{\partial g_{LM}}{\partial x^I} \right)g^{MI} \frac{\partial f}{\partial x^J}\\
&+(-1)^{\langle \deg(I), \deg(I) + \deg(L) \rangle + \langle \deg(J) , \deg(J)\rangle } \, g^{JL}\Gamma_{L I}^{\:\:\: \:\: I}\frac{\partial f}{\partial x^J}\,.
\end{align*}
It is clear from the proof of Proposition \ref{prop:ConLapVioDer}, or from the fact that the contraction of a symmetric object and a skew-symmetric object, that the second-order term vanishes for odd Riemannian metrics.  Then using Lemma \ref{lem:ContLC}, and specifically \eqref{eqn:oddContLC}, shows that $\Delta_g f=0$ for all $f \in C^\infty(M)$ provided the Riemannian metric is odd.
\end{proof}
\begin{remark}
 The reader should  be reminded of modular vector fields and the modular class of a Poisson manifold (see \cite{Weinstein:1997}).  Via analogy, Theorem \ref{trm:OddLapVanish}  is not surprising given that the modular class of a symplectic manifold vanishes, or in other words, Hamiltonian vector fields annihilate the Liouville volume.  However, there is no direct generalisation of the modular class for supermanifolds or $\Z_2^n$-manifolds equipped with odd Riemannian metrics.  
\end{remark}
\noindent \textbf{Observation:} One cannot develop a theory of harmonic functions etc. on  Riemannian $\Z_2^n$-manifolds with odd Riemannian metrics using the canonical connection Laplacian.\par 
Just for completeness, the connection Laplacian on  Riemannian $\Z_2^n$-manifold with an even Riemannian metric is locally given by
$$\Delta_g f = (-1)^{\langle \deg(J), \deg(J) \rangle } \, g^{JI}\left( \frac{\partial^2 f}{\partial x^I \partial x^J} - \Gamma_{I J}^{\:\:\: \:\: K} \, \frac{\partial f}{\partial x^K} \right),$$ 
which is, up to the sign factors, in agreement with the classical connection Laplacian on a Riemannian manifold (see for example \cite[Section 2.6, page 30]{Andrews:2011}). Deriving this local expression follows using the same steps as in the classical case (so Lemma \ref{lem:ContLC} and Proposition \ref{prop:ChriSymLoc} together with the symmetry of the metric and its inverse) and so details are left to the reader.

\section{Discussion}\label{sec:Disc}
In this paper we have defined a Riemannian $\Z_2^n$-manifold and explored their basic properties, such as the Fundamental Theorem, i.e.,  a canonical symmetric metric compatible affine connection, the Levi-Civita connection, always exists irrespective of the degree of the Riemannian metric.  We have deduced the $\Z_2^n$-graded generalisations of several objects of interest in Riemannian geometry such as the torsion and various curvature tensors associated with the Levi-Civita connection. In particular, it was shown that Riemannian $\Z_2^n$-manifolds with an odd metric, i.e., the total degree is 1, necessarily have vanishing scalar curvature. Thus, the Ricci scalar is not a very interesting curvature invariant in this case.  A little more generally, the scalar curvature cannot be a non-zero constant unless the Riemannian metric is of $\Z_2^n$-degree zero.   Thus, studying Riemannian $\Z_2^n$-manifolds with non-zero constant Ricci scalar is limited not just to the even case, but to the degree zero case.  Recall that de Sitter (dS) and anti-de Sitter (AdS) are Lorentzian Riemannian manifolds with constant positive and constant negative scalar curvature, respectively. Both these manifolds are of significance in cosmology and quantum field theory (the AdS/CFT correspondence, for example).  We have a higher graded ``no-go theorem''.
\medskip

\noindent \textbf{Statement:} \emph{There is no direct analogue of dS or AdS space in $\Z_2^n$-geometry where the Riemannian metric is of non-zero $\Z_2^n$-degree.}

\medskip

Four-dimensional Riemannian Einstein manifolds are important in the study of gravitational instantons and  Lorentzian Einstein manifolds play a r\^ole in string theory, quantum gravity and can serve as the target space of various sigma-models. A natural question here is if we have generalisations of Einstein manifolds were we replace the manifold with a Riemannian $\Z_2^n$-manifold with an arbitrary (but homogeneous) degree metric. The obvious thing to try is to set
$$Ric_\nabla = \kappa \, g,$$
where $Ric_\nabla$ is the Ricci curvature tensor associated with the Levi-Civita connection. It is clear that we require $\deg(\kappa) = \deg(g)$. Thus, $\kappa \in C^\infty(M)$ cannot be a non-zero constant if $\deg(g) \neq 0$.  We have another higher graded ``no-go theorem''.

\medskip

\noindent \textbf{Statement:} \emph{There is no direct analogue of an Einstein manifold in $\Z_2^n$-geometry where the Riemannian metric is of non-zero $\Z_2^n$-degree.}

\medskip

In conclusion, the basic notions, concepts, and mathematical statements found in classical Riemannian geometry, with a little care, generalise to the setting of $\Z_2^n$-geometry.  The challenge is to now find interesting examples and possible applications of Riemannian $\Z_2^n$-geometry in physics and classical differential geometry.

\section*{Acknowledgements}
A.J.~Bruce cordially thanks Norbert Poncin for many fruitful discussions about $\Z_2^n$-manifolds and related subjects.   J.~Grabowski acknowledges that his  research was funded by the  Polish National Science Centre grant HARMONIA under the contract number 2016/22/M/ST1/00542.

\end{document}